\numberwithin{equation}{section}
\newcounter{fakecnt}[section]
  \newtheorem{theorem}{Theorem}[fakecnt]
  \newtheorem{lemma}[theorem]{Lemma}
  \newtheorem{corollary}[theorem]{Corollary}
  \newtheorem{proposition}[theorem]{Proposition}
 \theoremstyle{remark} 
  \newtheorem{remark}[theorem]{Remark}
\begin{document}

\title{Geometric criteria for the absence of effective many-body interactions in nonadditive hard particle mixtures}

\author{Ren{\'e} Wittmann}
\email{rene.wittmann@hhu.de}
\affiliation{Institut f\"ur Theoretische Physik II: Weiche Materie, Heinrich-Heine-Universit\"at D\"usseldorf, 40225 D\"usseldorf, Germany}

\author{Sabine Jansen}
\email{jansen@math.lmu.de}
\affiliation{Mathematisches Institut,  Ludwig-Maximilians-Universit\"at M\"unchen, 
\mbox{80333 Munich, Germany}; Munich Center for Quantum Science and Technology (MCQST), Schellingstr.~4, 80799 M{\"u}nchen, Germany}

\author{Hartmut L{\"o}wen}
\email{hlowen@hhu.de}
\affiliation{Institut f\"ur Theoretische Physik II: Weiche Materie, Heinrich-Heine-Universit\"at D\"usseldorf, 40225 D\"usseldorf, Germany}

\date{\today}   

\begin{abstract}
We consider a mixture of small and big classical particles in arbitrary spatial dimensions interacting via hard-body potentials with non-additive excluded-volume interactions. 
In particular, we focus on variants of the Asakura--Oosawa (AO) model where
the interaction between the small particles  is neglected
but the big-small and big-big interactions are present and can be condensed into an effective depletion interaction among the big particles alone.
The original AO model involves hard spherical particles in three spatial dimensions with
interaction diameters $\sigma_\text{pp}=0$, $\sigma_\text{cc}>0$ and $\sigma_\text{pc}>\sigma_\text{cc}/2$ respectively, 
where  $\sigma_{ij}$ with $\{i,j\}=\{\text{p},\text{c}\}$ (indicating the physical interpretation of the small and big particles as polymers (p) and colloids (c), respectively) is the minimum possible center-to-center distance between particle $i$ and particle $j$ allowed by the excluded-volume constraints.
It is common knowledge that there are only pairwise effective depletion interactions between the big particles if the geometric condition $\sigma_\text{pc}/\sigma_\text{cc} < 1/\sqrt{3}$ is fulfilled.
In this case, triplet and higher-order many body interactions are vanishing and the equilibrium statistics
of the binary mixture can exactly be mapped onto that of an effective one-component system with
the effective depletion pair-potential.
Here we prove this geometric criterion rigorously and generalize it to polydisperse mixtures and to anisotropic particle shapes in any dimension, providing
geometric criteria sufficient to guarantee the absence of triplet and higher-order many body interactions. 
For an external  hard wall confining the full mixture, we also give criteria which guarantee that the system can be mapped onto one with effective external one-body interactions.
\end{abstract}

\maketitle

\section{Introduction}

Mixtures of classical particles establish not only important model systems of statistical mechanics
\cite{Ballone1986,Dijkstra1998,dijkstra1999direct,Roth2001,Schmidt0,Schmidt2,Schmidt_2004,Schmidt2010,Ramon2014,Oettel2018,Smallenburg2020,Egorov,Ditz2021,Egorov2021}
but are also used to a large extent to describe colloidal and colloid--polymer
mixtures \cite{PuseyLes_Houches_1991,Tuinier_Lekkerkerker_textbook,Poon1,Poon2,Opdam2021,Denton2021,Gussmann2021,Tom2021,Gimperlein2021}.
One of the most famous non-additive models is that originally proposed by Asakura and Oosawa \cite{AO1,AO2,Miyazaki2022_AOspecial,Oosawa2021}.
This Asakura--Oosawa (AO) model, which is also
called Asakura--Oosawa--Vrij model \cite{VRIJ},
involves a binary mixture of a big particle species, called ``colloids'' (c), and a small one, called ``polymers'' (p) or ``depletants'', which interact solely by excluded volumes,
specified by the interaction diameter $\sigma_{ij}\geq0$, i.e., the minimal distance that a particle of species $i$ can approach a particle of species $j$, where $\{i,j\}=\{\text{p},\text{c}\}$.
The small particles are ideal, so their interaction diameter $\sigma_\text{pp}=0$ is vanishing, where we used the species index p for the small particles. Conversely the big particles
(with species index c) interact among themselves through the nonzero interaction diameter $\sigma_\text{cc}>0$
and are therefore in common physical terms referred to as hard spheres.
The cross-interaction between small and big particles is again hard with a non-additive interaction diameter
 $\sigma_\text{pc}>(\sigma_\text{pp}+\sigma_\text{cc})/2=\sigma_\text{cc}/2$. 
  This inequality defines the present model as a mixture which exhibits a positive non-additivity.
Or in other words, around the big particles there is a spherical excluded volume region which is depleted by small particles
as the probability to stay there is zero.

The AO model constitutes a paradigm for coarse-graining
a binary mixture towards an effective one-component system by integrating out the degrees of freedom of the small particles
\cite{Hansen_Lowen,Fortini_2006}. 
This culminates by the important concept of depletion attraction between the colloids (the big particles) as induced by the
osmotic pressure of the non-adsorbing polymer coils (the small particles) \cite{PuseyLes_Houches_1991,Tuinier_Lekkerkerker_textbook}. For a pair of colloidal
particles this effective depletion attraction can be calculated analytically and equals the overlap volume
of the two spherical depletion regions around the colloids times the osmotic polymer pressure.
Taking the inverse route, this coarse graining scheme also allows to efficiently treat pairwise attractions in a one-component system  by evaluating a (usually more accurate) theory for mixtures \cite{Maeritz2021a,Maeritz2021b}.

Another remarkable aspect of the
AO model that has been established in the physics literature \cite{Gast1983,Evans1999,Evans2003,Hansen_Lowen}
is that for  $\sigma_\text{pc}/\sigma_\text{cc} < 1/\sqrt{3}$ the depletion zones of any statistically
possible configuration of big particles {\it never\/}  has a triple (or higher order) overlap.
This implies that the effective triplet and higher-order effective interactions between the big particles \cite{Allahyarov,monch0jorda2003,Gruenberg,Santos2015,Kobayashi2019,Campos2021} do vanish when integrating out the small particles.
This important statistical feature  shows that there exists an exact mapping onto
an effective one-component system of big particles which then only interact via effective depletion
pair-interactions if the depletion zone is sufficiently small.
Hence the full AO system can be effectively  viewed as a pairwise system interacting via the
 hard-core repulsion of the colloids plus attractive tail potential defined by the depletion interaction due to the presence of ``polymers''.
 For these systems,  standard
liquid state theory can be applied to obtain static correlation functions and equilibrium
phase transitions \cite{Hansen_MacDonald_book}.
 Despite the fundamental importance of depletion interactions in physics, the criteria commonly used to demonstrate their exactness have not been stated in a rigorous mathematical way.

The aim of this paper is threefold.
First, we prove that the  geometric criterion $\sigma_\text{pc}/\sigma_\text{cc} < 1/\sqrt{3}$ for the absence of triplet interactions 
in the standard AO model with identical and spherical colloids is both necessary and sufficient in a strict mathematical sense.  
Second, we focus on more generalized conditions needed to guarantee the absence of triplet and higher-order effective interactions.
By using pure geometric concepts, we extend the geometric condition to polydisperse hard-sphere mixtures and prove sufficient criteria for anisotropic shapes
of the depleted particles \cite{glaser2015parallel,Wood2021,Peters2021,Mason2021,Cheng2021,Calero2021,Santos2021,kuhnhold2022structure}.
Although statistical properties of polydisperse mixtures 
and mixtures of hard particles with different shapes were considered
to a large extent, the basic question of an exact mapping onto a pure pairwise model was not yet much
addressed in the literature for these general situations.
Third, we consider the AO mixture in contact with a planar external hard wall,
previously considered for wetting and layering situations, see, e.g., Refs.~\cite{Widom,Brader2001,interface_AO_1,interface_AO_2,interface_AO_3,interface_AO_4,review_AO}.
In this case we provide a geometric criterion which establishes the exact mapping on
an effective one-component system with one-body external interactions only.

 This paper is structured as follows. In Sec.~\ref{sec_AOmodel} we introduce the AO model in physical terms and briefly state the conditions for the absence of triplet depletion interactions known in the physics literature.
 In Sec.~\ref{sec_proofMAIN} we switch to a mathematical language and state our main theorems, providing a rigorous and generalized set of conditions for the absence of triplet depletion interactions.
The proof of these theorems is completed in Sec.~\ref{sec_proofPROP} by making contact with the historical Apollonius problem and Descartes' circle theorem.
Finally, in Sec.~\ref{sec_conclusions}, we restate the implications of our main theorem in physical terms and make some concluding remarks.

\begin{figure}[t!]
    \centering
       \includegraphics[width=\textwidth]{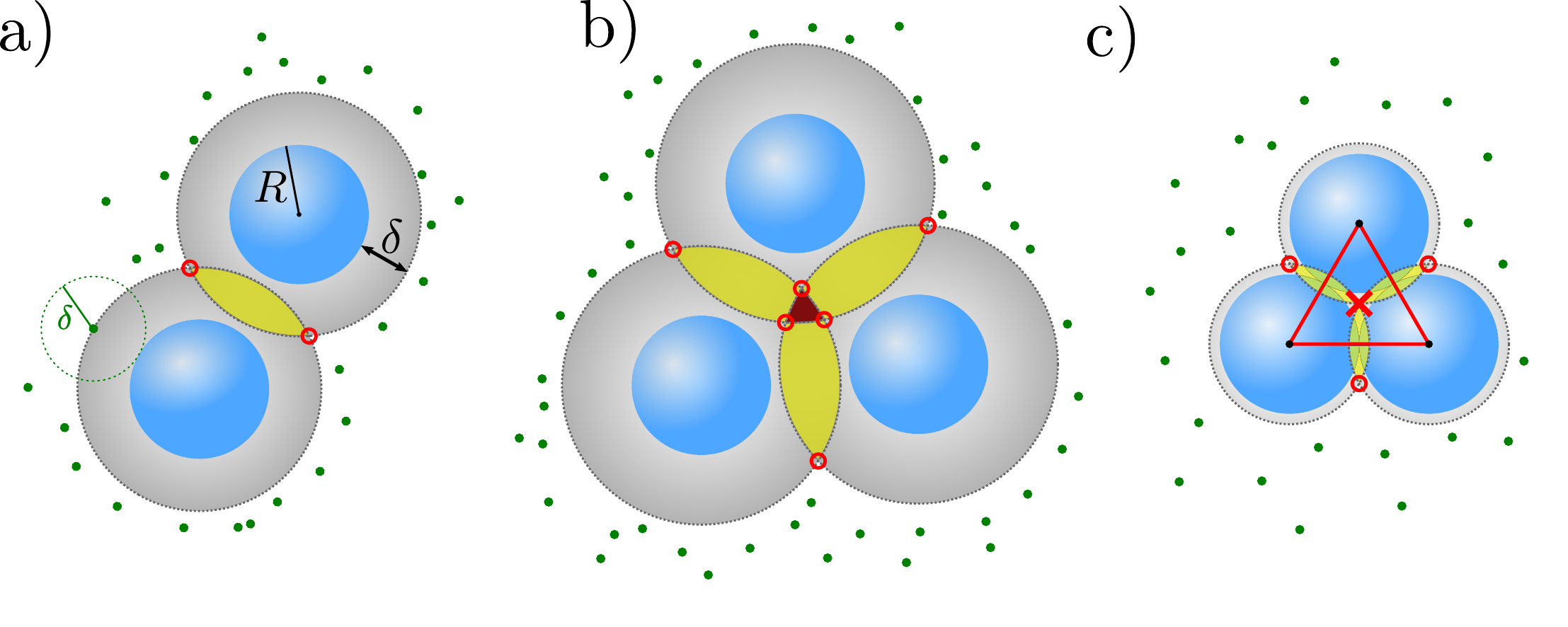}
    \caption{
    Schematic of colloids modeled as large hard spheres (blue) of radius $R$, Eq.~\eqref{eq_Vcc}, in the presence of polymers (or depletants), modeled as ideal point particles (green dots), Eq.~\eqref{eq_Vpp}, with a depletion radius $\delta$ 
    such that the centers of the depletants are excluded from the depletion shells (gray) of radius $R+\delta$ around each colloid, Eq.~\eqref{eq_Vpc}.
    For the purpose of illustration, we draw a dashed green depletion circle around one depletant which is at the smallest possible distance to a colloid.
    The free volume for the centers of the polymers can be increased when the depletion shells of the colloids intersect (an overlap of the hard cores of the colloids is forbidden).    
   Here we highlight the regions of pair overlap (yellow) and triple overlap (red) of the depletion shells and the points of pair overlap (red circles) 
   and triple overlap (red cross) of their surfaces for different configurations and depletion radii $\delta$.
    \textbf{a)} Pairwise intersection of two depletion shells. The volume of the overlap region gives rise to the effective depletion pair potential $V_{\mathrm{cc}}^\text{eff}(r)$ 
    between the colloids, given by Eq.~\eqref{eq_depletionpotential}.
    \textbf{b)} Triple intersection of three depletion shells. The corresponding effective depletion triplet potential is not exactly known.  
    \textbf{c)} Critical configuration with three spheres at mutual contact and maximal depletion radius $\delta=\delta_\text{cr}$, such that the depletion shells overlap in a single point. 
    The value of $\delta_\text{cr}$ can be easily inferred from the drawn equilateral triangle $\mathcal{T}$ (red) and leads to the condition~\eqref{eq_deltaMO} for the absence of triplet interactions,
    such that the description of the colloid--colloid interaction by $V_{\mathrm{cc}}^\text{eff}(r)$ becomes exact.
    }\label{fig:depletion_figure_3_poly}
\end{figure}

\section{The Asakura--Oosawa (AO) Model \label{sec_AOmodel}}

In its original version, the Asakura-Oosawa (AO) model for colloid--polymer mixtures
is defined for monodisperse spherical colloids in three spatial dimensions, modeled as perfectly hard particles and
ideal (i.e., non-interacting) polymers, also called depletants, see Fig.~\ref{fig:depletion_figure_3_poly} for an illustration of the fundamental aspects. 
Expressing the interaction diameter $\sigma_{\mathrm{cc}}=2R$ in terms of  a spherical radius,
 the hard-core colloid--colloid interaction is
given by the pair potential
\begin{align}
  V_{\mathrm{cc}}(r)=\left\{
    \begin{array}{ll}
      0 &\textrm{if $r\geq 2R$} \\
      \infty&\textrm{if $r < 2R$} 
    \end{array}\right.,
    \label{eq_Vcc}
\end{align}
where $r$ is the distance between two colloid centers.
The colloidal interaction radius $R$ sets a typical length scale, such that there is no geometric overlap between the two spheres for $r<2R$.
The infinite pair potential implies a vanishing Boltzmann factor $\exp (- V_\mathrm{cc} (r)/(k_\text{B}T))$, where $k_\text{B}$ is the Boltzmann constant and $T$ denotes the temperature,
in the classical partition sum so that the statistical weight is zero for any overlapping configuration of these hard spheres.
In other words, the particles will not overlap.
The polymer--polymer interaction potential
\begin{align}
  V_{\mathrm{pp}}(r)=0\,,
    \label{eq_Vpp}
\end{align}
corresponding to the interaction diameter $\sigma_{\mathrm{pp}}=0$, vanishes, which means that the polymers are treated as point particles.
The essential cross-interaction between the colloids and polymers is pairwise and given by the non-additive interaction diameter $\sigma_{\mathrm{pc}}=R+\delta$, 
which translates to the hard-core pair interaction potential
\begin{align}
  V_{\mathrm{cp}}(r)=V_{\mathrm{pc}}(r)= \left\{
    \begin{array}{ll}
      0 &\textrm{if $r\geq R+\delta $} \\
      \infty&\textrm{if
      $r < R+\delta$} 
    \end{array}\right.
    \label{eq_Vpc}
\end{align}
with $r$ now denoting the distance between a colloid and polymer center. This interaction
introduces the radius $\delta>0$ of the depletion shell, which is the minimal distance a polymer can
approach the  spherical surface of the colloid.
The case $\delta=0$ corresponds to an additive mixture without depletion interactions.

The crucial idea behind the AO model is that one can integrate out the degrees of freedom associated with the polymer coordinates
and then consider an effective colloid--colloid interaction.
This depletion interaction reflects the increase in the free volume accessible to the polymer centers, 
which is the overall system volume minus the total depletion zone arising from all colloidal particles in the system,
if the depletion shells of nearby colloids overlap.
Considering the lens-shaped overlap region of two spherical depletion shells in three spatial dimensions, as illustrated in Fig.~\ref{fig:depletion_figure_3_poly}a,
it can be  shown exactly that the effective depletion pair potential between two colloids with center-to-center distance $r$ reads
\begin{align}\label{eq_depletionpotential}
V_{\mathrm{cc}}^\text{eff}(r)=V_{\mathrm{cc}}(r)+ V_{\mathrm{cc}}^\text{dep}(r)
\end{align}
with the negative contribution
\begin{align}
 V_{\mathrm{cc}}^\text{dep}(r)=
\begin{cases}
  - P_\mathrm{p} \frac{4\pi (R+\delta)^3}{3}
  \left(1 - \frac{3r}{4(R+\delta)} +\frac{r^3}{16(R+\delta)^3}\right) & \text{if } 2R \leq r \leq 2R+2\delta\\
 \ \ \ \ \ \ \ \ \ \ \ \ \ \ \ \ \ \ \ \ \ \ \ \ \ \ \ \  0 & \text{else } 
\end{cases}
\end{align}
due to depletion if two colloids are sufficiently close such that their depletion shells overlap.
The strength of this depletion interaction depends on the osmotic pressure $P_\mathrm{p}=k_\text{B}T\rho_\mathrm{p}$ of the polymers at density $\rho_\mathrm{p}$.

For a triplet configuration of spheres, as shown in Fig.~\ref{fig:depletion_figure_3_poly}b, the essential
point is how exactly the individual depletion shells, which are inaccessible for the polymers, overlap.
If there is a region of triple intersection of three depletion shells, then the volume of the overall depletion zone cannot be calculated from the individual depletion shells and their pair overlaps alone.
In the language of statistical mechanics~\cite{Hansen_MacDonald_book} this implies
that effective many-body interaction between the colloids are arising in addition to the effective pair potential $V_{\mathrm{cc}}^\text{eff}(r)$. 
In turn, if for {\it any} configuration of three and more colloids there exists no point where all of their depletion shells do intersect, then triplet and higher many-body interactions do vanish.
This reduces the condition for the absence of triplet interactions to a pure geometric overlap problem.
 In the special situation of monodisperse hard spheres of the same radius $R$, as considered so far, 
 the geometric criterion for the depletion radius  $\delta=2\sigma_\text{pc}R/\sigma_\text{cc}-R$  reads \cite{Gast1983,Evans1999,Evans2003,Hansen_Lowen}
\begin{align}
\frac{\delta}{R} < \frac{2}{\sqrt{3}} -1\,,
\label{eq_deltaMO}
\end{align}
so that a triple overlap of the depletion shells is excluded for any configuration of three spheres.
It can be inferred from the configuration, shown in Fig.~\ref{fig:depletion_figure_3_poly}c, where the three spheres are at mutual contact.
Moreover, if the spheres are at contact with a hard planar external wall,
the criterion for the absence of pairwise depletion interactions between the colloids and the wall reads \cite{Brader2001,Evans2003}
\begin{align}
\frac{\delta}{R} < \frac{1}{4}\,.
\label{eq_deltaPL}
\end{align}
The importance of the bound from Eq.~\eqref{eq_deltaMO}, which implies the condition in Eq.~\eqref{eq_deltaPL}, stems from the exactness of the AO model, i.e.,
describing the colloidal degrees of freedom only by the pairwise effective depletion potential, Eq.~\eqref{eq_depletionpotential}, if $\delta$ is below the given threshold.

 \begin{figure}[t!]
    \centering
       \includegraphics[width=\textwidth]{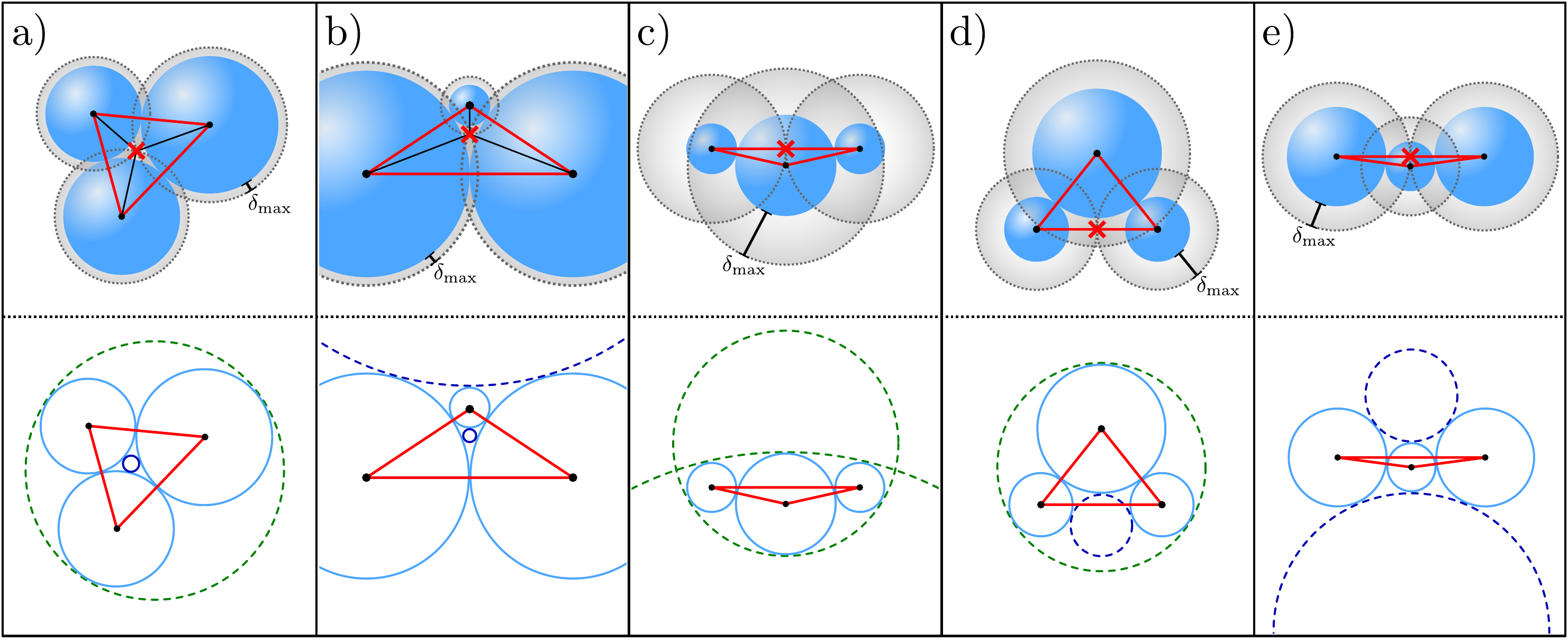}
       \caption{
Illustration of how to determine the maximal depletion radius  $\delta_\text{max}>0$ around 
three balls with different radii $R_i$, such that there is no triple intersection for $\delta<\delta_\text{max}$.
We show a projection of five representative configurations onto a plane  containing the triangle $\mathcal T$ (red) whose corners are centers of the balls (more details are given in Fig.~\ref{fig:circle3_arb}).
The balls can thus be treated as two-dimensional disks. 
 \textbf{Top panel:} location of the single point $\mathbf p$ of triple intersection (red cross) of the depletion zones (gray) of the three disks (blue) with total radii $R_i+\delta_\text{max}$, compare Prop.~\ref{prop:dichotomy}.
\textbf{Bottom panel:} (segments of) solution circles of the related Apollonius problem, posed in the specific form to find a circle (dark blue) that is externally tangent to the three given circles (which are mutually externally tangent).
Internally tangent solution circles (dark green) have no relevance for the depletion problem.
 The center of a solution circle represents the point $\mathbf p$ (and its radius $\delta^*=\delta_\text{max}$ equals the maximal depletion radius)
if and only if it is located inside the triangle $\mathcal T$ (red) and the circle is externally tangent, compare Prop.~\ref{prop:dichotomy}.
These circles are drawn with solid lines, all other solutions with dashed lines.
The columns depict a representative scenario with
       \textbf{a)} one externally tangent solution circle with center inside $\mathcal T$;
       \textbf{b)} two externally tangent solutions, exactly one circle has its center inside $\mathcal T$;
       \textbf{c)} no externally tangent solutions;
       \textbf{d)} one  externally tangent solution circle with center outside $\mathcal T$;
       \textbf{e)} two externally tangent solution circles with both centers outside $\mathcal T$.
       \label{fig:circle3}
       }
\end{figure}

This basic AO model for monodisperse hard spheres can be generalized towards hard hyperspheres (balls) in arbitrary spatial dimensions $n\in \mathbb{N}$ in a straightforward manner by taking for $r$ the Euclidean distance in arbitrary dimensions. 
Moreover, it can be generalized to arbitrarily shaped hard colloidal bodies in the following way:
 if -- for a given configuration of colloids -- the volumes of the two bodies overlap, the pair potential $V_\mathrm{cc}$ is formally infinite so that overlap is excluded. 
The polymer--colloid interaction is infinity if the polymer center is closer than the depletion distance $\delta$ to at least one point of the colloidal surfaces.
 In three spatial dimensions, this generalization includes polydisperse colloidal hard spheres with different radii $R_i$  and therefore
 also a planar hard wall in the limit where one of these radii diverges. 
 It also includes nonspherical shapes, both convex and nonconvex.

We are particularly interested in the following questions.
Why is the arrangement of three hard spheres depicted in Fig.~\ref{fig:depletion_figure_3_poly}c the critical one, which gives rise to the upper bound for the depletion radius stated in Eq.~\eqref{eq_deltaMO}?
What is the critical configuration if we consider mixtures with different radii?
How must Eq.~\eqref{eq_deltaMO} be generalized in such a case, or in general, for arbitrary mixtures of different hard bodies with an anisotropic shape?
To answer these questions in a mathematically rigorous way, we extend the problem and consider an arbitrary but fixed arrangement of three balls, for which we seek the maximal shell radius $\delta_\text{max}$, 
such that the triple overlap of the depletion shells is exactly in one point, which implies the general condition $\delta<\delta_\text{max}$.
As illustrated in Fig.~\ref{fig:circle3}, the definition of $\delta_\text{max}$ depends on the configuration.
 In what follows, we define $\delta_\text{max}$ in a formal way and analyze its configurational dependence in mathematical terms to determine its minimal value among all possible configurations, 
which provides both generalized conditions for the absence of triplet interactions and a rigorous proof thereof.

\section{Conditions for the absence of triplet interactions \label{sec_proofMAIN}}

Our main mathematical result is a sufficient condition for the absence of triple overlap of dilated versions of non-overlapping convex bodies in $\mathbb R^n$ with $n\geq 2$. By ``non-overlapping'' we mean that the interiors of the convex bodies are disjoint; contact in boundary points is allowed.  Given $\delta>0$ and $C\subset \mathbb R^n$,  consider the dilated set (often called \emph{outer parallel body}) given by 
\begin{equation} 
	C(\delta) = \bigl \{ \mathbf{r} \in \mathbb R^n:\, \mathrm{dist}(\mathbf{r}, C) \leq \delta\bigr \},
\end{equation} 	
where the distance of a point to a set is defined in terms of the Euclidean distance $|\cdot|$ as $\mathrm{dist}(\mathbf{r}, C) = \inf_{\mathbf{y} \in C} |\mathbf{y}- \mathbf{r}|$. 

The sufficient condition for absence of triple overlap involves a bound on the rolling radius of the (non-dilated) sets. A closed convex set $C$ has positive rolling radius if a ball can roll freely inside $C$ along the boundary,
i.e., if for some radius $r>0$ and all $\mathbf y\in \partial C$ there exists a closed ball $B(\mathbf x, r)$ with $\mathbf y\in B(\mathbf x, r)\subset C$.
The \emph{rolling radius} $\mathrm{Roll}(C)$ is the supremum of the possible radii $r>0$. If no ball with radius $r>0$ can roll freely inside $C$, we assign $\mathrm{Roll}(C)=0$.

\begin{remark}
When $\partial C$ is a $C^2$ surface, Blaschke's rolling theorems \cite{blaschke56} (see also Ref.~\cite{howard99} for extensions and additional references) show that the rolling radius is the minimum of the inverse of the principal curvatures. When smoothness of the boundary $\partial C$ is not assumed, curvature in convex geometry can be defined in terms of curvature measures. Bounded curvature is replaced with absolute continuity of the curvature measure with respect to a surface measure and the condition that the Radon-Nikodym derivative is finite; a relation with strictly positive rolling radius can be established in this general context as well \cite{bangert99,hug99,hug2002}.
\end{remark} 

\begin{theorem} \label{thm:main1} 
Fix $n\geq 2$ and $\kappa \geq 3$.
	Let $C_i$, $i=1,2,\ldots,\kappa$ be compact convex subsets of $\mathbb R^n$ with disjoint interiors. Suppose that they have positive rolling radii $\mathrm{Roll}(C_i)>0$ and that 
	\begin{equation} 
		\delta < \Bigl(\frac{2}{\sqrt 3} - 1\Bigr) \min_{i=1,\ldots,\kappa} \mathrm{Roll}(C_i)\,.
		\label{eq_deltaMOmathGEN}
	\end{equation} 
	Then the intersection of the dilated sets is empty: 
	$\bigcap_{i=1}^\kappa C_i(\delta) =  \varnothing$.  If instead 
	\begin{equation} 
		\delta \leq \Bigl(\frac{2}{\sqrt 3} - 1\Bigr) \min_{i=1,\ldots,\kappa} \mathrm{Roll}(C_i)\,
		\label{eq:suff2}
	\end{equation} 
	then the intersection has empty interior and zero Lebesgue measure: $\mathrm{vol}(\bigcap_{i=1}^\kappa C_i(\delta)) = 0$. 
\end{theorem}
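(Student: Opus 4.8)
The plan is to reduce the theorem, in two steps, to an elementary statement about three mutually non-overlapping balls of equal radius, and then to extract the constant $\tfrac{2}{\sqrt3}-1$ from a one-variable optimization over triangle angles. Since $\bigcap_{i=1}^\kappa C_i(\delta)\subseteq\bigcap_{i\in S}C_i(\delta)$ for every three-element subset $S\subseteq\{1,\dots,\kappa\}$ and $\min_{i\in S}\mathrm{Roll}(C_i)\ge\min_{1\le i\le\kappa}\mathrm{Roll}(C_i)$, it suffices to treat $\kappa=3$; write $\rho:=\min_{i=1,2,3}\mathrm{Roll}(C_i)>0$.

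The crucial ingredient is a lemma that lets us replace each body, near the point in question, by an inscribed ball of radius $\rho$: \emph{if $C\subset\mathbb R^n$ is compact convex with $\mathrm{Roll}(C)\ge\rho$ and $\mathbf p\in\mathbb R^n$ is arbitrary, then there is a closed ball $B(\mathbf x,\rho)\subseteq C$ with $\mathrm{dist}(\mathbf p,B(\mathbf x,\rho))=\mathrm{dist}(\mathbf p,C)$.} To prove it, let $\mathbf q$ be the (unique) metric projection of $\mathbf p$ onto $C$. If $\mathbf q\in\partial C$, pick an outer unit normal $\mathbf n$ of $C$ at $\mathbf q$ --- in the direction of $\mathbf p-\mathbf q$ when $\mathbf p\notin C$ --- so that $C$ lies in the half-space $H^-=\{\mathbf z:\langle\mathbf z-\mathbf q,\mathbf n\rangle\le0\}$. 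A rolling ball $B(\mathbf x,\rho)\subseteq C$ through $\mathbf q$ exists (the supremum defining $\mathrm{Roll}$ is attained on a compact set by a short compactness argument), it necessarily touches $\partial C$ at $\mathbf q$, so $|\mathbf x-\mathbf q|=\rho$, and since $B(\mathbf x,\rho)\subseteq H^-$ with $\mathbf q$ on the bounding hyperplane, Cauchy--Schwarz forces $\mathbf x=\mathbf q-\rho\mathbf n$. Hence the point of $B(\mathbf x,\rho)$ closest to $\mathbf p$ is $\mathbf q$ itself, giving $\mathrm{dist}(\mathbf p,B(\mathbf x,\rho))=|\mathbf p-\mathbf q|=\mathrm{dist}(\mathbf p,C)$. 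If $\mathbf p\in\mathrm{int}\,C$, the same normal-direction argument applied at a boundary point realizing the inradius at $\mathbf p$ (or simply taking $\mathbf x=\mathbf p$ when $\mathbf p$ is farther than $\rho$ from $\partial C$) produces a ball $B(\mathbf x,\rho)\subseteq C$ with $\mathbf p\in B(\mathbf x,\rho)$, so both distances are $0$.

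Now assume $\mathbf p\in C_1(\delta)\cap C_2(\delta)\cap C_3(\delta)$. The lemma yields balls $B(\mathbf x_i,\rho)\subseteq C_i$ with $|\mathbf p-\mathbf x_i|\le\rho+\delta$; since the $C_i$ have disjoint interiors so do these balls, hence $|\mathbf x_i-\mathbf x_j|\ge2\rho$ for $i\ne j$. It remains to show: \emph{three points in $\mathbb R^n$ with pairwise distances $\ge2\rho$ cannot all lie within distance $r$ of a common point when $r<\tfrac{2}{\sqrt3}\rho$, and for $r=\tfrac{2}{\sqrt3}\rho$ the only possibility is an equilateral triangle of side $2\rho$ with the common point at its circumcenter (the unique minimizer of the enclosing radius).} Orthogonal projection onto the affine span of the three points reduces this to the plane: the collinear case gives $r\ge2\rho$, and for a genuine triangle the smallest enclosing radius is either half the longest side (the obtuse/right case, bounded below by $\sqrt2\,\rho>\tfrac{2}{\sqrt3}\rho$ since the longest side squared exceeds $8\rho^2$) or the circumradius $R$ (the acute case), and in the latter case each side equals $2R\sin(\text{opposite angle})\ge2\rho$, so $R\ge\rho/\sin(\text{smallest angle})\ge\rho/\sin(\pi/3)=\tfrac{2}{\sqrt3}\rho$, with equality only for the equilateral triangle. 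This yields $\rho+\delta\ge\tfrac{2}{\sqrt3}\rho$, contradicting \eqref{eq_deltaMOmathGEN} and proving $\bigcap_{i=1}^\kappa C_i(\delta)=\varnothing$.

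For \eqref{eq:suff2} the strict case is already done, so take $\delta=(\tfrac{2}{\sqrt3}-1)\rho>0$. Then the chain above is tight for any $\mathbf p\in\bigcap_iC_i(\delta)$: the $\mathbf x_i$ form an equilateral triangle of side $2\rho$, $\mathbf p$ is its circumcenter, and $|\mathbf p-\mathbf x_i|=\rho+\delta>\rho$ for all $i$, so $\mathbf p\notin B(\mathbf x_i,\rho)$, hence $\mathbf p\notin C_i$ and $\mathrm{dist}(\mathbf p,C_i)=\delta$. Therefore $\bigcap_iC_i(\delta)\subseteq\{\mathbf q:\mathrm{dist}(\mathbf q,C_1)=\delta\}=\partial\bigl(C_1(\delta)\bigr)$, which has empty interior and zero Lebesgue measure because $C_1(\delta)$ is a convex body. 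I expect the rolling-ball lemma to be the main obstacle --- in particular, forcing the ball's center to lie along the direction dictated by the metric projection, and cleanly handling $\mathbf p\in\mathrm{int}\,C_i$; the remainder is standard convexity together with the one-variable inequality $\min_i\sin(\text{angle}_i)\le\sin(\pi/3)$ that is the real source of the constant. (A more detailed route, keeping inscribed balls of the individual radii $\mathrm{Roll}(C_i)$ and solving the associated Apollonius problem via Descartes' circle theorem, is what the figures and Sec.~\ref{sec_proofPROP} suggest; for Theorem~\ref{thm:main1} alone the equal-radius reduction above suffices.)
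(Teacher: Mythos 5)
Your proposal is correct, and it is worth noting where it coincides with and where it departs from the paper's argument. The first half is essentially the paper's reduction: restrict to three bodies, take a point $\mathbf p$ in the triple intersection, and use the rolling radius to replace each $C_i$ near $\mathbf p$ by an inscribed ball of the common radius $\rho=\min_i\mathrm{Roll}(C_i)$ (the paper does exactly this at the metric projections, with the same separate treatment of the case $\mathbf p\in\mathrm{int}\,C_i$; your Cauchy--Schwarz pinning of the center $\mathbf x=\mathbf q-\rho\mathbf n$ is a slightly stronger statement than the paper needs, but it is what delivers the distance \emph{equality} you use later, and your compactness argument for attaining the supremum in $\mathrm{Roll}$ makes explicit something the paper leaves implicit). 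The decisive difference is how the constant is extracted: the paper quotes Theorem~\ref{thm:main2}, whose proof occupies the rest of Sec.~\ref{sec_proofMAIN} and Sec.~\ref{sec_proofPROP} (planar reduction, monotonicity of $\delta_{\max}$ in the triangle angles, reduction to the contact configuration, Descartes/Soddy circle), whereas you prove the only case actually needed -- equal radii -- by the elementary observation that three centers at pairwise distance $\geq 2\rho$ have smallest enclosing radius $\geq \tfrac{2}{\sqrt3}\rho$ (circumradius $\geq \rho/\sin(\pi/3)$ in the acute case, half the longest side $\geq\sqrt2\,\rho$ otherwise, $\geq 2\rho$ if collinear), so $\rho+\delta\geq\tfrac{2}{\sqrt3}\rho$. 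This buys a short, self-contained proof of Theorem~\ref{thm:main1}, at the price of not yielding the sharper polydisperse statement of Theorem~\ref{thm:main2}, which the paper needs anyway. Your treatment of the borderline case $\delta=(\tfrac{2}{\sqrt3}-1)\rho$ also differs: the paper uses a one-line continuity argument (interior points of the intersection lie in some $\bigcap_i C_i(\delta')$ with $\delta'<\delta$, which is empty) plus the fact that a convex set with empty interior lies in a hyperplane, while you use the rigidity of the equality case (equilateral centers, $\mathbf p$ the circumcenter, $\mathrm{dist}(\mathbf p,C_i)=\delta$ exactly) to place the intersection inside $\partial\bigl(C_1(\delta)\bigr)$; both work, the paper's being somewhat lighter, yours relying on the distance equality in your lemma and on the (true, standard, but unproved in your text) identification of $\{\mathbf q:\mathrm{dist}(\mathbf q,C_1)=\delta\}$ with a subset of the boundary of the convex body $C_1(\delta)$.
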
 

 Theorem~\ref{thm:main1} is complemented at the end of this section by a result when one of the $C_i$'s is a half-plane (``hard wall''), see Theorem~\ref{thm:main1WALL}. 

In the situation of condition~\eqref{eq:suff2} we shall say that \emph{there is no overlap} or that \emph{there is no triplet (or higher-order) interaction}. This is because effective interactions are given by volumes of intersections and may vanish even if the intersection is non-empty, e.g., when it consists of a single point.

Theorem~\ref{thm:main1} is a consequence of the following theorem on triple overlap of dilation of closed balls,  which also provides a necessary criterion in this special case.

\begin{theorem} \label{thm:main2}
	Fix $n\geq 2$ and $R>0$.
	The following two conditions are equivalent:
	\begin{enumerate} 
		\item [(i)] $\delta>0$ and $R>0$ satisfy 
		\begin{equation} 
			\delta < \Bigl(\frac{2}{\sqrt 3} - 1\Bigr)R. 
		\label{eq_deltaMOmath}
		\end{equation}
		\item [(ii)] For all closed balls $B(\mathbf{r_i}, R_i)\subset \mathbb R^n$, $i=1,2,3$ with $R_i>0$ and centers $\mathbf{r}_i\in \mathbb R^n$ that have disjoint interiors and satisfy 
		$\min(R_1,R_2,R_3) \geq R$, 
		the dilated balls $B(\mathbf{r}_i, R_i+\delta)$ have empty triple intersection.
	\end{enumerate} 
\end{theorem}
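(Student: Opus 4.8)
The plan is to prove the two implications separately, each by contraposition. For the easy direction (ii)$\Rightarrow$(i), I would assume $\delta\ge(\tfrac2{\sqrt3}-1)R$ and exhibit a violating configuration: inside a $2$-plane of $\mathbb R^n$ (available since $n\ge2$), place $\mathbf r_1,\mathbf r_2,\mathbf r_3$ at the vertices of an equilateral triangle of side $2R$ and take $R_1=R_2=R_3=R$. Then the interiors are disjoint ($|\mathbf r_i-\mathbf r_j|=2R=R_i+R_j$), $\min_iR_i=R$, and the centroid $\mathbf c$ satisfies $|\mathbf c-\mathbf r_i|=2R/\sqrt3=R+(\tfrac2{\sqrt3}-1)R\le R+\delta$, so $\mathbf c\in\bigcap_{i=1}^3B(\mathbf r_i,R_i+\delta)\ne\varnothing$ and (ii) fails.

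For (i)$\Rightarrow$(ii) I would argue the contrapositive: assuming balls $B(\mathbf r_i,R_i)$, $i=1,2,3$, with disjoint interiors and $\min_iR_i\ge R$ have a common point $\mathbf p\in\bigcap_iB(\mathbf r_i,R_i+\delta)$, deduce $\delta\ge(\tfrac2{\sqrt3}-1)R$, contradicting \eqref{eq_deltaMOmath}. First I would reduce to the plane: the affine hull $A$ of $\{\mathbf r_1,\mathbf r_2,\mathbf r_3\}$ has dimension $1$ or $2$, and replacing $\mathbf p$ by its orthogonal projection onto $A$ keeps the radii and all pairwise distances while not increasing any $|\mathbf p-\mathbf r_i|$; if $\dim A=1$ the middle center $\mathbf r_2$ gives $|\mathbf r_1-\mathbf r_3|\ge R_1+2R_2+R_3$, hence $2\delta\ge(|\mathbf p-\mathbf r_1|-R_1)+(|\mathbf p-\mathbf r_3|-R_3)\ge2R_2\ge2R$, more than enough. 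So I may assume the three disks lie in $\mathbb R^2$. Then I would set $\delta_\star:=\min_{\mathbf q\in\mathbb R^2}\max_i(|\mathbf q-\mathbf r_i|-R_i)$; disjointness of the interiors forces $\bigcap_iB(\mathbf r_i,R_i)=\varnothing$, so $\delta_\star>0$, and the choice $\mathbf q=\mathbf p$ gives $\delta_\star\le\delta$, so it is enough to show $\delta_\star\ge(\tfrac2{\sqrt3}-1)R$. For a minimizer $\mathbf p_\star$, a convexity/KKT argument shows that at $\mathbf p_\star$ exactly two or exactly three of the functions $|\mathbf q-\mathbf r_i|-R_i$ attain the maximum; this dichotomy is what underlies the case distinction in Fig.~\ref{fig:circle3}, and I would treat the two cases separately.

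If all three constraints are active, so $|\mathbf p_\star-\mathbf r_i|=R_i+\delta_\star$ for $i=1,2,3$, then the stationarity condition places $\mathbf p_\star$ in the triangle $\mathcal T=\mathrm{conv}\{\mathbf r_1,\mathbf r_2,\mathbf r_3\}$, so the three angles at $\mathbf p_\star$ add up to $2\pi$. The law of cosines together with $|\mathbf r_i-\mathbf r_j|\ge R_i+R_j$ gives $\cos\angle\mathbf r_i\mathbf p_\star\mathbf r_j\le-1+\frac{2\delta_\star(R_i+R_j+\delta_\star)}{(R_i+\delta_\star)(R_j+\delta_\star)}$ for each pair, and the right-hand side is decreasing in $R_i$ and in $R_j$, hence at most $-1+\frac{2\delta_\star(2R+\delta_\star)}{(R+\delta_\star)^2}$. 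Since $\arccos$ is decreasing, $2\pi\ge3\arccos\bigl(-1+\tfrac{2\delta_\star(2R+\delta_\star)}{(R+\delta_\star)^2}\bigr)$, equivalently $-1+\frac{2\delta_\star(2R+\delta_\star)}{(R+\delta_\star)^2}\ge\cos\tfrac{2\pi}3=-\tfrac12$, which rearranges to $3\delta_\star^2+6R\delta_\star-R^2\ge0$, i.e.\ $\delta_\star\ge(\tfrac2{\sqrt3}-1)R$. (For three mutually tangent balls these estimates are tight and the resulting relation $1/\delta_\star=\sum_iR_i^{-1}+2\sqrt{\sum_{i<j}(R_iR_j)^{-1}}$ is Descartes' circle theorem for the inner Soddy circle — the route announced in the introduction — while the optimization over the radii explains why the equal-radius, mutually tangent configuration is critical.) If only two constraints are active, say those indexed $i,j$, then $\mathbf p_\star$ lies on the segment $[\mathbf r_i,\mathbf r_j]$ and $\delta_\star=\tfrac12(|\mathbf r_i-\mathbf r_j|-R_i-R_j)$; inserting this into $|\mathbf p_\star-\mathbf r_k|\le R_k+\delta_\star$ and using $|\mathbf r_k-\mathbf r_i|\ge R_k+R_i$ and $|\mathbf r_k-\mathbf r_j|\ge R_k+R_j$, a short computation yields $\delta_\star(\delta_\star+R+R_k)\ge RR_k$, hence $\delta_\star\ge(\sqrt2-1)R$, which is even stronger. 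In both cases $\delta\ge\delta_\star\ge(\tfrac2{\sqrt3}-1)R$, proving (i)$\Rightarrow$(ii); together with the first paragraph this gives the equivalence.

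The step I expect to be the main obstacle is making the dichotomy rigorous: identifying precisely which constraints are active at $\mathbf p_\star$ and where $\mathbf p_\star$ lies relative to $\mathcal T$, excluding degenerate positions, and organizing the two-active case. This is essentially the work of the next section, where the configuration is recast as an Apollonius problem and the alternative ``solution circle inside versus outside the triangle of centers'' is analyzed; by contrast the reduction to the plane, the law-of-cosines estimate, and the elementary optimization over the radii are routine.
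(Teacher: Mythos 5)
Your argument is correct, but it takes a genuinely different route from the paper's. What is shared: the reduction to the plane of the centers (the paper's Lemma~\ref{lem:reduced}) and the necessity direction via the equal-radius contact configuration, which is exactly how the paper closes the proof of Theorem~\ref{thm:main2} using Eq.~\eqref{eq:mindelta-cr}. For sufficiency the paper argues structurally: it introduces the threshold $\delta_{\max}$ (Lemma~\ref{lem_deltamaxUNIQUE}), establishes the dichotomy of Prop.~\ref{prop:dichotomy} (tangent circle centered in $\mathcal T$ versus pairwise threshold), proves that $\delta_{\max}$ is monotone in each angle at fixed adjacent side lengths (Prop.~\ref{theorem_da}, the bulk of Sec.~\ref{sec_proofPROP}), deforms an arbitrary admissible configuration down to the mutually touching one (Corollary~\ref{cor_deltacr}, including the collinear degeneration), and only then evaluates $\delta_{\mathrm{cr}}$ by Descartes' circle theorem (Prop.~\ref{theorem_dR}), minimizing the explicit formula over the radii. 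You instead keep the configuration fixed, characterize the critical dilation variationally as $\delta_\star=\min_{\mathbf q}\max_i\bigl(|\mathbf q-\mathbf r_i|-R_i\bigr)$, and bound it from below in place: in the three-active case the angle sum $2\pi$ at $\mathbf p_\star\in\mathcal T$, the law of cosines and $|\mathbf r_i-\mathbf r_j|\ge R_i+R_j$ give $3\delta_\star^2+6R\delta_\star-R^2\ge 0$, and in the two-active case a Stewart-type computation gives the stronger bound $(\sqrt 2-1)R$; I checked both computations and they are correct, as is your separate treatment of collinear centers. The obstacle you flag is in fact routine: at a minimizer of the convex, coercive max, $0$ lies in the convex hull of the unit vectors $(\mathbf p_\star-\mathbf r_i)/|\mathbf p_\star-\mathbf r_i|$ over active indices (each active value equals $\delta_\star\ge 0>-R_i$, so $\mathbf p_\star\neq\mathbf r_i$ and the active functions are differentiable there); a single active index is impossible, two active indices force $\mathbf p_\star\in[\mathbf r_i,\mathbf r_j]$ with $\delta_\star=\tfrac12(|\mathbf r_i-\mathbf r_j|-R_i-R_j)$, and with three active the relation $0=\sum_i\lambda_i(\mathbf p_\star-\mathbf r_i)/|\mathbf p_\star-\mathbf r_i|$ exhibits $\mathbf p_\star$ as a convex combination of the centers, so $\mathbf p_\star$ lies in the closed triangle and the three angles at $\mathbf p_\star$ sum to exactly $2\pi$ even on $\partial\mathcal T$. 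The trade-off: your route avoids both the monotonicity proposition and Descartes' theorem, making the proof of Theorem~\ref{thm:main2} shorter and self-contained, but it does not deliver the exact formula for $\delta_{\mathrm{cr}}(R_1,R_2,R_3)$ or the Apollonius-circle dichotomy, which the paper exploits elsewhere (e.g., in Corollary~\ref{cor_deltacr}, in the hard-wall limit Eq.~\eqref{eq:mindelta-crWALL}, and in the discussion of Fig.~\ref{fig:circle3}).
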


Before we address the proof of Theorem~\ref{thm:main2}, let us show how Theorem~\ref{thm:main1} follows.
In fact, it is enough to know  the implication (i) $\Rightarrow$ (ii) from Theorem~\ref{thm:main2} for balls that have the same radius $R$.

\begin{proof} [Proof of Theorem~\ref{thm:main1}]
	Consider first triple intersection, $\kappa =3$. We show that if the triple overlap is non-empty, then the condition on $\delta$ is violated. Thus, suppose that there exists a point $\mathbf{r}\in \mathbb R^n$ that is in the triple intersection of the dilations $C_i(\delta)$, $i=1,2,3$. 
	 Let $\mathbf{p}_i$ be the projection of $\mathbf{r}$ onto $C_i$, i.e., the uniquely defined point $\mathbf{p}_i\in C_i$ with $|\mathbf{r}- \mathbf{p}_i| = \mathrm{dist}(\mathbf{r}, C_i)\leq \delta$.
	 Set $R_{\min}:= \min_{i=1,2,3} \mathrm{Roll}(C_i)$. If $\mathbf r$ is in none of the interiors of the $C_i$'s, then $\mathbf p_i \in \partial C_i$ and the definition of the rolling radius guarantees the existence of three balls  $B(\mathbf x_i, R_{\min})$ such that $\mathbf p_i \in B(\mathbf x_i, R_{\min}) \subset C_i$ for $i=1,2,3$. The three balls of radius $R_{\min}$ have disjoint interiors and the triple intersection $\cap_{i=1}^3 B(\mathbf x_i, R_{\min}+ \delta)$ is non-empty as it contains $\mathbf r$. Theorem~\ref{thm:main2} with $R_i = R_{\min}$ implies that $\delta \geq (2/\sqrt 3 - 1) R_{\min}$. 

	Now suppose $\mathbf r$ is in the interior of one of the $C_i$'s,  say $C_1$. By the disjointness of the interiors of the sets $C_i$, the point $\mathbf r$ is not in the interior of $C_2$ or $C_3$, and we can define the projections $\mathbf p_i \in \partial C_i$ for $i=2,3$ and the associated  balls $B_i = B(\mathbf x_i, R_{\min})$, $i=2,3$ as before. 
	
	We claim that there is also a closed ball $B'_1\subset C_1$ of radius $R_{\min}$ that contains $\mathbf r$. To see why, consider the set of all balls that contain $\mathbf r$ and are contained in $C_1$, and among them let $B(\mathbf x_1,R)$ be a ball of maximum possible radius. Then $B(\mathbf x_1, R)$ touches the boundary $\partial C_1$ in at least one point $\mathbf p_1$. Working with the boundary point $\mathbf p_1$ and the definition of the rolling radius one sees that $R$ must be larger or equal to the rolling radius $R_1$ of $C_1$, hence $R\geq R_1\geq R_{\min}$. In particular, there exists a ball $B'_1\subset B(\mathbf x_1,R) \subset C_1$ of radius $R_{\min}$ with $\mathbf r\in B'_1$. 
	
	The balls $B'_1$, $B_2$, $B_3$ have disjoint interiors and their $\delta$-dilations contain $\mathbf r$, we conclude again with Theorem~\ref{thm:main2} applied to $R_i = R_{\min}$ that $\delta \geq (2/\sqrt 3 - 1) R_{\min}$.  This completes the proof of empty intersection for $\kappa =3$. The claim for $\kappa \geq 3$ follows right away since the intersection $\bigcap_{i=1}^\kappa C_i(\delta)$ is contained in the triple intersection $\bigcap_{i=1}^3 C_i(\delta)$. 
	
	For the second part of the theorem the only case left to investigate is when $\delta$ is equal to the right-hand side of condition~\eqref{eq:suff2}. A continuity argument shows that the convex set $\cap_{i=1}^\kappa C_i(\delta)$ has empty interior: the interior consists of those points $y\in \mathbb R^n$ that have distance strictly smaller than $\delta$ to each of the sets $C_i$. Every such point is in the intersection $\cap_{i=1}^\kappa C_i(\delta')$ for some $\delta'<\delta$. As the latter intersection is empty by the part of the theorem already proven, no such point exists and the interior is empty.  The only way for a convex set to have empty interior is that the convex set is contained in some affine hyperplane, in which case the measure is zero.
\end{proof}

\begin{remark} \label{rem:convexity}
	 A close look at the proof reveals that convexity and compactness are not essential for Theorem~\ref{thm:main1}:
	If, for some radius $R>0$, each $C_i$ has the property that every point $\mathbf p \in C_i$ is contained in a ball of radius $R$ that lies entirely in $C_i$ (for convex bodies, this implies $R\leq \min_{i=1,\ldots,\kappa} \mathrm{Roll}(C_i)$), and $\delta$ satisfies condition~\eqref{eq_deltaMOmath} for this $R$, then the dilated bodies $C_i(\delta)$ have empty intersection.
\end{remark}

\begin{remark} \label{rem:improvedcriterion}
The bound in Theorem~\ref{thm:main1} can be improved for some compact convex sets with nonempty interior, specifically for the case of convex polyhedra with $\mathrm{Roll}(C_i)=0$. 
Choose for each convex and compact $C_i\subset\mathbb R^n$, $i=1,2,\ldots,\kappa$ with  disjoint interiors
	a convex and compact subset $K_i\subseteq C_i$ with $\mathrm{Roll}(K_i)>0$.
	By definition, all $K_i$ have disjoint interiors.
	Then, it follows by Theorem~\ref{thm:main1} that the intersection of the dilated sets $K_i(\Delta)$ is empty if the pseudo-depletion radius $\Delta$ satisfies $\Delta<\min_{i=1,\ldots,\kappa}\Delta_{K_i}$
	with
	\begin{equation}
	\Delta_{K_i}:=\Bigl(\frac{2}{\sqrt 3} - 1\Bigr) \mathrm{Roll}(K_i)\,.
	\end{equation}
	If there exists a depletion radius $\delta'$ with $0\leq\delta'\leq\min_{i=1,\ldots,\kappa}\Delta_{K_i}$ such that $ C_i(\delta')\subseteq K_i(\Delta_{K_i})$ for all $i=1,2,\ldots,\kappa$,
	then also the intersection of all dilated sets $C_i(\delta')$ is empty.
	If such a $\delta'$ exceeds the bound in Eq.~\eqref{eq_deltaMOmathGEN} we have found an improvement.
	
	To properly define a new bound that is, at least in a restricted sense, optimal, let us first define the optimal radius
	$\Delta^{\!*}(K_i):=
	\sup\{\delta'\geq 0:\ C_i(\delta')\subseteq K_i(\Delta_{K_i})\}$ for any $K_i$ and subsequently
	$\Delta^{\!*}_i:=\sup_{K_i\subseteq C_i}\Delta^{\!*}(K_i)$, i.e., optimizing over all convex and compact subsets.
	Then the condition
	\begin{equation}
	\delta<\min_{i=1,\ldots,\kappa}\Delta^{\!*}_i=\min_{i=1,\ldots,\kappa}\;\sup_{K_i\subseteq C_i}\;
	\sup\{\delta'\geq 0:\ C_i(\delta')\subseteq K_i(\Delta_{K_i})\}
	\end{equation}
 implies that $\bigcap_{i=1}^\kappa C_i(\delta) =  \varnothing$.
It is obvious that this bound is better or equal to that in Eq.~\eqref{eq_deltaMOmathGEN}, since the bounds are equal if the
supremum is attained for $K_i=C_i$ in case of the $C_i$ with the smallest rolling radius.
It remains an open problem whether this extended criterion is now also necessary in the sense of Theorem~\ref{thm:main2}.
\end{remark} 

For the proof of Theorem~\ref{thm:main2}, we start from three non-overlapping hard bodies and grow the shell radius $\delta$ from $\delta =0$ until the triple intersection first becomes non-empty. 
This defines a threshold value $\delta_{\max}$ that we compute explicitly when the three bodies are balls in contact.
The threshold value is defined precisely in the next lemma, which works for general compact bodies.

\begin{lemma}[Existence and uniqueness of maximal shell radius for triple intersection]
	\label{lem_deltamaxUNIQUE}
	Let $C_i$, $i=1,2,3$ be three non-empty compact subsets of $\mathbb R^n$ with disjoint interiors. 
	Then there exists a uniquely defined $\delta_{\max} = \delta_{\max}(C_1,C_2,C_3) \in [0,\infty)$ such
	that the triple intersection of the dilated sets $C_i(\delta)$ is empty if and only if $\delta <\delta_{\max}$:
	\begin{equation}\label{eq:deltamax-def}
		C_1(\delta) \cap C_2(\delta) \cap C_3(\delta) = \varnothing\ \Leftrightarrow \delta < \delta_{\max}(C_1,C_2,C_3). 
	\end{equation}
\end{lemma}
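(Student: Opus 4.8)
The plan is to define $\delta_{\max}$ directly as the threshold of the set of ``bad'' $\delta$'s and then verify that this set is an interval of the right form. Concretely, let
\[
	\mathcal{E} := \bigl\{\delta \geq 0 : C_1(\delta)\cap C_2(\delta)\cap C_3(\delta) = \varnothing \bigr\},
\]
the set of shell radii for which the triple intersection is still empty. First I would observe that $\mathcal E$ is a (possibly degenerate or empty) interval with left endpoint $0$: this is because $\delta \mapsto C_i(\delta)$ is monotone increasing, so if $\delta \in \mathcal E$ and $0 \le \delta' \le \delta$ then $C_i(\delta')\subseteq C_i(\delta)$ for each $i$, hence the triple intersection at $\delta'$ is contained in the (empty) triple intersection at $\delta$, giving $\delta'\in\mathcal E$. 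Thus $\mathcal E$ is downward closed in $[0,\infty)$. Setting $\delta_{\max}:=\sup\mathcal E$ (with the convention $\sup\varnothing = 0$) then gives the candidate value, and uniqueness of $\delta_{\max}$ is automatic once we show the stated equivalence, since the equivalence pins down $\delta_{\max}$ as the unique real number with $\mathcal E = [0,\delta_{\max})$.

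Next I would establish the two halves of~\eqref{eq:deltamax-def}. The implication $\delta < \delta_{\max} \Rightarrow \delta\in\mathcal E$ is immediate: by definition of supremum there is some $\tilde\delta\in\mathcal E$ with $\delta < \tilde\delta$, and downward closedness gives $\delta\in\mathcal E$. For the converse, $\delta \ge \delta_{\max} \Rightarrow \delta\notin\mathcal E$, I need to rule out the boundary case $\delta = \delta_{\max}$, i.e.\ I must show that the supremum is \emph{not} attained when $\delta_{\max}>0$ — equivalently that the triple intersection at $\delta = \delta_{\max}$ is non-empty. (When $\delta_{\max}=0$ there is nothing to prove, since $\delta = 0\notin\mathcal E$ means the bodies already share a common point.) This is the only nontrivial step: I claim the map $\delta \mapsto C_1(\delta)\cap C_2(\delta)\cap C_3(\delta)$ has a closedness/upper-semicontinuity property that lets non-emptiness pass to the limit from the right. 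Pick a decreasing sequence $\delta_k \downarrow \delta_{\max}$ with each $\delta_k \notin \mathcal E$ (possible since $\mathcal E=[0,\delta_{\max})$), choose $\mathbf r_k \in \bigcap_i C_i(\delta_k)$, note that all $\mathbf r_k$ lie in the fixed compact set $C_1(\delta_1)$, extract a convergent subsequence $\mathbf r_{k_j}\to\mathbf r^*$, and check $\mathbf r^* \in \bigcap_i C_i(\delta_{\max})$: indeed $\mathrm{dist}(\mathbf r^*, C_i) = \lim_j \mathrm{dist}(\mathbf r_{k_j}, C_i) \le \lim_j \delta_{k_j} = \delta_{\max}$, using that $\mathbf r \mapsto \mathrm{dist}(\mathbf r, C_i)$ is $1$-Lipschitz and that $C_i$ is non-empty so the distance is finite. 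Hence $\delta_{\max}\notin\mathcal E$, and combined with downward closedness $\mathcal E = [0,\delta_{\max})$ exactly, which is~\eqref{eq:deltamax-def}.

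Finally I would record that $\delta_{\max} < \infty$: since each $C_i$ is bounded and non-empty, for $\delta$ large enough (e.g.\ larger than the diameter of $C_1\cup C_2\cup C_3$) the dilated sets all contain a fixed common ball and certainly have non-empty triple intersection, so $\mathcal E$ is bounded above and $\delta_{\max}\in[0,\infty)$. The main obstacle is the boundary case $\delta=\delta_{\max}$ just discussed; everything else is monotonicity bookkeeping. Note that compactness of the $C_i$ is used twice — once via boundedness to get $\delta_{\max}<\infty$ and once via closedness (through the compact set $C_1(\delta_1)$, which is closed and bounded because $C_1$ is) to extract the limit point $\mathbf r^*$ — while disjointness of the interiors, though assumed, is not actually needed for this lemma (it will matter for the quantitative identification of $\delta_{\max}$ in the ball case treated in the following propositions).
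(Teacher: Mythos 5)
Your proof is correct and follows essentially the same route as the paper's: define $\delta_{\max}$ as the supremum of the shell radii with empty triple intersection, use monotonicity of $\delta\mapsto C_i(\delta)$ for the downward-closed half of the equivalence and boundedness of the $C_i$ for finiteness, and use a convergent sequence of witnesses together with continuity of $\mathbf r\mapsto\mathrm{dist}(\mathbf r,C_i)$ to show the triple intersection is non-empty at $\delta=\delta_{\max}$. The only cosmetic difference is your parenthetical on $\delta_{\max}=0$: rather than declaring that case trivial, observe that your own limiting argument applied with $\delta_k\downarrow 0$ shows $\bigcap_i C_i\neq\varnothing$ there, which is exactly how the paper's proof covers it.
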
 

\begin{proof}
	The threshold value $\delta_{\max}$ is clearly unique. It remains to prove existence. 
	Let 
	\begin{equation} \label{eq:deltamax-def2}
		\delta_{\max}:= \sup\bigl\{ \delta>0:\ C_1(\delta) \cap C_2(\delta) \cap C_3(\delta) = \varnothing\bigr\}
	\end{equation}  
	if the latter set is non-empty, and $\delta_{\max} =0$ otherwise (this may occur when the convex bodies touch in a common cusp of the surfaces $\partial C_i$).
	Clearly  the triple intersection is non-empty for all sufficiently large $\delta$, therefore $\delta_{\max}<\infty$. 
	
	To check that Eq.~\eqref{eq:deltamax-def} holds true, we note that the map $\delta \mapsto \cap_{i=1}^3 C_i(\delta) =: \mathcal I(\delta)$ is monotone increasing, 
	i.e., $\delta\leq \delta'$ implies $\mathcal I(\delta) \subset \mathcal I(\delta')$. 
	Let $\delta <\delta_{\max}$. Then by the definition~\eqref{eq:deltamax-def2} of $\delta_{\max}$ there exists $\delta'$ with $\delta \leq \delta' \leq \delta_{\max}$, 
	such that $\mathcal I(\delta') = \varnothing$. The inclusion $\mathcal I(\delta)\subset \mathcal I(\delta')$ then implies $\mathcal I(\delta) = \varnothing$. 
	For $\delta>\delta_{\max}$, we get right away from~\eqref{eq:deltamax-def2} that the triple intersection is non-empty. 
	
	It remains to check that at $\delta = \delta_{\max}$, the triple intersection $\mathcal I(\delta)$ is non-empty. To that aim let $(\delta_n)_{n\in \mathbb N}$ be a strictly decreasing sequence with $\delta_n\downarrow \delta_{\max}$.  Let $(\mathbf x_n)_{n\in \mathbb N}$ be a sequence of points with $\mathbf x_n\in  \mathcal I(\delta_n)$. The sequence $(\mathbf x_n)$ is bounded and therefore admits an accumulation point $\mathbf x^*\in \mathbb R^n$; assume for simplicity that $\mathbf x_n\to \mathbf x^*$ (if not, pass to a subsequence).  By the continuity of the maps $\mathbf x\mapsto \mathrm{dist}(\mathbf x, C_i)$, the limit point $\mathbf x^*$ satisfies 
	\begin{equation}
		\mathrm{dist}(\mathbf x^*, C_i) = \lim_{n\to \infty} \mathrm{dist}(\mathbf x_n, C_i) \leq \lim_{n\to \infty} \delta_n = \delta_{\max}.
	\end{equation} 
	for $i=1,2,3$. As a consequence, $\mathbf x^*$ is in $\mathcal I(\delta_{\max})$, hence $\mathcal I(\delta_{\max})$ is non-empty. It follows that $\delta_{\max}$ defined in~\eqref{eq:deltamax-def2} satisfies~\eqref{eq:deltamax-def}.
\end{proof} 

Next we specialize to convex bodies that are closed balls. For balls it is enough to understand the two-dimensional setup, as depicted in Fig.~\ref{fig:circle3}.

\begin{lemma} [Reduced dimensionality of the overlap problem] \label{lem:reduced}
	Let $B_i = B(\mathbf r_i, R_i)$, $i=1,2,3$ be three  closed balls with disjoint interiors and $\mathcal P \subset \mathbb R^n$ a two-dimensional affine subspace containing the three centers $\mathbf r_i$, $i=1,2,3$. Then for all $\delta>0$, 
	\begin{equation} 
		\bigcap_{i=1}^3 B(\mathbf r_i,R_i+\delta) = \varnothing \ \Leftrightarrow\ \mathcal P\cap \Bigl(\bigcap_{i=1}^3 B(\mathbf r_i,R_i+\delta)\Bigr) = \varnothing. 
	\end{equation} 
\end{lemma}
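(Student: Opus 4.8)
The plan is to dispatch the easy implication trivially and prove the nontrivial one by an orthogonal projection argument. The direction ``$\Rightarrow$'' is immediate: if $\bigcap_{i=1}^3 B(\mathbf r_i,R_i+\delta)=\varnothing$, then its subset $\mathcal P\cap\bigl(\bigcap_{i=1}^3 B(\mathbf r_i,R_i+\delta)\bigr)$ is empty as well. So it remains to prove ``$\Leftarrow$'', and for this I would establish the contrapositive: if the triple intersection $\bigcap_{i=1}^3 B(\mathbf r_i,R_i+\delta)$ contains some point $\mathbf x$, then it already contains a point of $\mathcal P$.

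The key step is the observation that the orthogonal projection onto $\mathcal P$ does not increase the distance to any point of $\mathcal P$. Let $\pi\colon \mathbb R^n\to \mathcal P$ denote the orthogonal projection onto the affine subspace $\mathcal P$ and set $\mathbf x':=\pi(\mathbf x)\in\mathcal P$. Since each center $\mathbf r_i$ lies in $\mathcal P$ by hypothesis and the vector $\mathbf x-\mathbf x'$ is orthogonal to $\mathcal P$, the Pythagorean identity yields
\[
	|\mathbf x-\mathbf r_i|^2 = |\mathbf x-\mathbf x'|^2 + |\mathbf x'-\mathbf r_i|^2 \geq |\mathbf x'-\mathbf r_i|^2
\]
for $i=1,2,3$. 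Hence $|\mathbf x'-\mathbf r_i|\leq |\mathbf x-\mathbf r_i|\leq R_i+\delta$, so $\mathbf x'\in B(\mathbf r_i,R_i+\delta)$ for every $i$, i.e., $\mathbf x'\in\mathcal P\cap\bigl(\bigcap_{i=1}^3 B(\mathbf r_i,R_i+\delta)\bigr)$, which is therefore non-empty. This proves the contrapositive and with it the lemma.

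There is no genuine obstacle here: the statement amounts to the remark that, regarded as a set of constraints on the distance to its center, a ball is unaffected by collapsing the directions orthogonal to an affine subspace through that center, and the three centers share such a subspace by assumption. The only point requiring a line of care is that the single projected point $\mathbf x'$ satisfies all three constraints \emph{simultaneously}, which is exactly what the displayed estimate delivers uniformly in $i$. I note in passing that neither the disjointness of the interiors of the $B_i$ nor the specific value of $\delta>0$ enters the argument; the same reasoning applies verbatim to any finite intersection of balls whose centers all lie in $\mathcal P$.
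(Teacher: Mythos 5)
Your proof is correct and follows essentially the same route as the paper: both handle ``$\Rightarrow$'' trivially and prove the converse by orthogonally projecting a point of the triple intersection onto $\mathcal P$ and noting that, since the centers lie in $\mathcal P$, the projection does not increase the distances to the $\mathbf r_i$ (you simply spell out the Pythagorean identity that the paper leaves implicit). Your closing remarks that the disjointness of the interiors and the value of $\delta$ are not needed are also accurate.
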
 

\begin{proof}
	The implication ``$\Rightarrow$'' is trivial. The converse implication follows once we prove 
	\begin{equation} \label{eq:reduced-dim}
		\bigcap_{i=1}^3 B(\mathbf r_i,R_i+\delta) \neq \varnothing \ \Rightarrow\ \mathcal P\cap \Bigl(\bigcap_{i=1}^3 B(\mathbf r_i,R_i+\delta)\Bigr) \neq \varnothing.
	\end{equation} 
	Let $\mathbf r\in \bigcap_{i=1}^3 B(\mathbf r_i,R_i+\delta)$ and $\mathbf {r}^\perp$ the orthogonal projection of $\mathbf r$ onto the plane $\mathcal P$. Then $|\mathbf r_i - \mathbf r^\perp| \leq |\mathbf r_i - \mathbf r| \leq R_i +\delta$, for all $i=1,2,3$. It follows that $\mathbf r^\perp$ is in $\mathcal P \cap (\bigcap_{i=1}^3 B(\mathbf r_i,R_i+\delta))$, in particular the latter set is non-empty.  
\end{proof}

\begin{figure}[t!]
    \centering
        \includegraphics[width=0.8\textwidth]{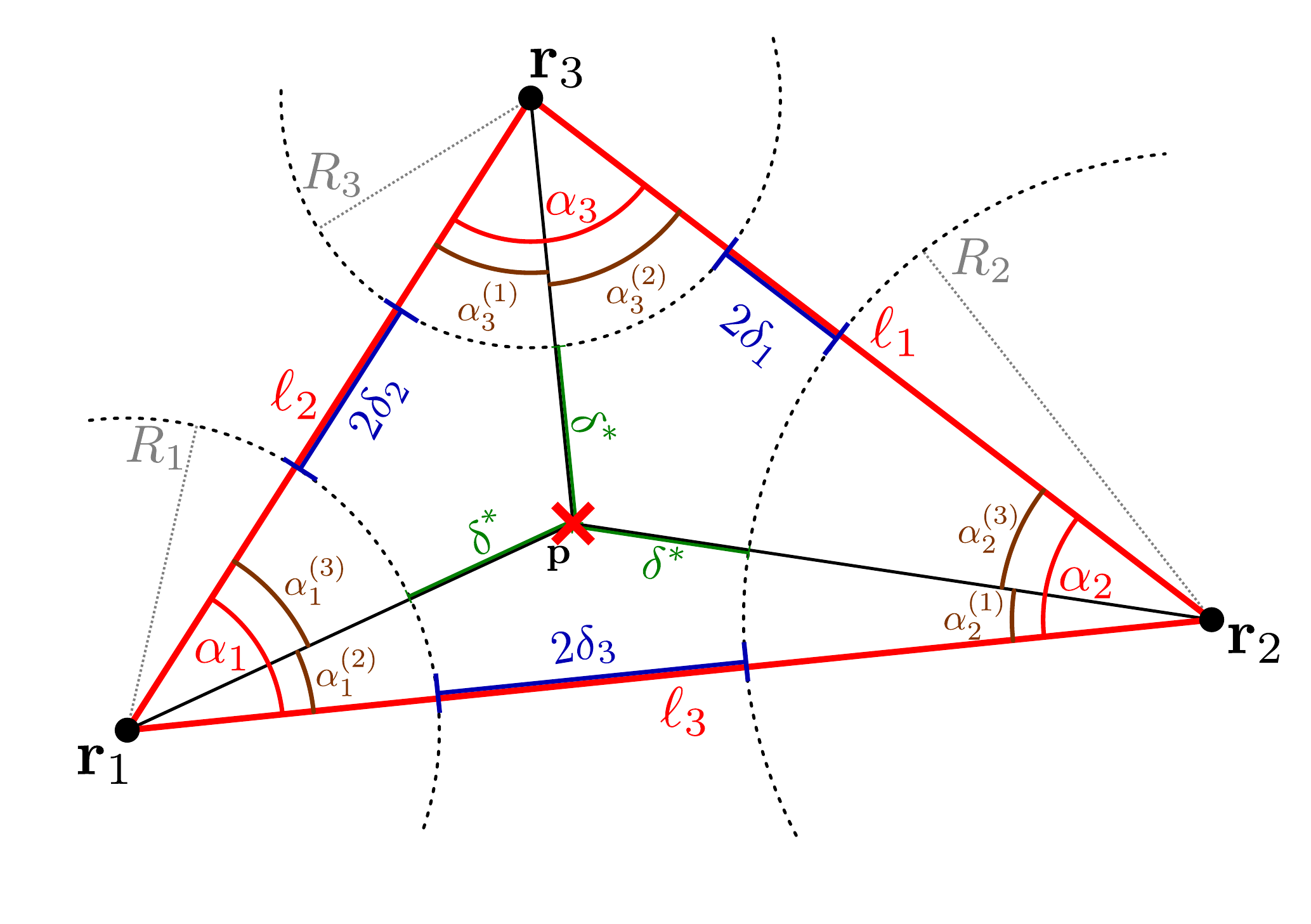} 
    \caption{Geometry of the triangle $\mathcal T$ with edge lengths $\ell_i$, interior angles $\alpha_i$ and corners at $\mathbf{r}_i$ with $i=1,2,3$, as introduced before Prop.~\ref{theorem_da}.  
 The corners correspond to the centers of three disks $B(\mathbf r_i, R_i)$  of radii $R_i$, compare Fig.~\ref{fig:circle3}.
    We also indicate the particular depletion radii $\delta_i$, defined in Eq.~\eqref{eq_deltai}, such that the two dilated disks $B(\mathbf r_j, R_j+\delta_i)$ and $B(\mathbf r_k, R_k+\delta_i)$ are at contact ($\{i,j,k\} = \{1,2,3\}$).
    When the three dilated disks $B(\mathbf r_i, R_i+\delta^*)$ intersect in a single point $\mathbf{p}\in\mathcal T$ (red cross), we define the angles $\alpha_i^{(j)}$ with $j\neq i$ according to Eq.~\eqref{eq:alpha1}.
    }\label{fig:circle3_arb}
\end{figure}

Thus let us focus on dimension $n=2$, in which balls turn into disks. Consider first the non-degenerate case in which the three centers $\mathbf r_i$ of the non-overlapping disks do not lie on a common line.  Let $\mathcal T$ be the triangle with corners $\mathbf r_i$, $i=1,2,3$. The triangle has interior angles $\alpha_i \in (0,\pi)$ and side-lengths $\ell_i$. The indices are such that $\alpha_i$ is the angle at corner $i$ and $\ell_i$ is the length of the triangle side opposite the corner $\mathbf r_i$, see Fig.~\ref{fig:circle3_arb}. Non-overlapping of the balls  is equivalent to the condition $\ell_i \geq R_j+ R_k$ for all pairwise distinct $i,j,k$. 

Given three disks $B_i=B(\mathbf r_i,R_i)$, consider  the maximal shell radius $\delta_{\max}= \delta_{\max}(B_1,B_2,B_3)$ from Lemma~\ref{lem_deltamaxUNIQUE}. At fixed radii $R_1,R_2,R_3$, it is a function of the triangle $\mathcal T$ with corners $\mathbf r_i$ that is invariant under Euclidean isometries of the plane. Up to Euclidean transformations, the triangle $\mathcal T$ is uniquely determined by fixing an angle $\alpha_i$ and the side lengths $\ell_j$, $\ell_k$ of the adjacent triangle sides. Thus we may view $\delta_{\max}$ as a function of, say, $\alpha_1$, $\ell_2$, and $\ell_3$. By some abuse of notation we use the same letter $\delta_{\max}(\alpha_i,\ell_j,\ell_k)$. 

\begin{proposition}[Monotonicity of the maximal shell radius] \label{theorem_da}
	Fix  $R_1,R_2,R_3>0$. Then, for all enumerations $(i,j,k)$ of $\{1,2,3\}$ and all $\ell_j,\ell_k>0$, the map 
	\[
		(0,\pi) \ni \alpha_i\mapsto \delta_{\max}(\alpha_i,\ell_j,\ell_k)
	\] 
	is monotone increasing. 
\end{proposition}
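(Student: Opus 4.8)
The plan is to reduce the proposition, via the dimensional reduction of Lemma~\ref{lem:reduced}, to a one-dimensional monotonicity statement about the distance from a moving point to a fixed lens-shaped region. Fixing an enumeration $(i,j,k)$ of $\{1,2,3\}$ and working in the plane of the three centers, I would place $\mathbf r_i$ at the origin and $\mathbf r_j=(\ell_k,0)$ on the positive $x$-axis. Then varying $\alpha_i$ with $\ell_j,\ell_k$ (and all radii) fixed moves \emph{only} the third center, along the circle of radius $\ell_j$ about $\mathbf r_i$, namely $\mathbf r_k(\alpha_i)=\ell_j(\cos\alpha_i,\sin\alpha_i)$. For each $\delta>0$ set $L_\delta:=B(\mathbf r_i,R_i+\delta)\cap B(\mathbf r_j,R_j+\delta)$, a compact convex set that does not depend on $\alpha_i$ and is symmetric under reflection in the $x$-axis. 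Because $\bigcap_{l=1}^{3}B(\mathbf r_l,R_l+\delta)=L_\delta\cap B(\mathbf r_k(\alpha_i),R_k+\delta)$, this triple intersection is non-empty exactly when $\mathrm{dist}(\mathbf r_k(\alpha_i),L_\delta)\le R_k+\delta$, and by Lemma~\ref{lem_deltamaxUNIQUE} non-emptiness holds exactly when $\delta\ge\delta_{\max}(\alpha_i,\ell_j,\ell_k)$. Hence the proposition reduces to the claim
\begin{equation}\label{eq:monotone-dist}
\text{for every fixed }\delta>0,\qquad (0,\pi)\ni\alpha\longmapsto\mathrm{dist}\bigl(\mathbf r_k(\alpha),L_\delta\bigr)\ \text{is non-decreasing},
\end{equation}
since for $\alpha_i<\alpha_i'$ and any $\delta\ge\delta_{\max}(\alpha_i',\ell_j,\ell_k)$ one then has $\mathrm{dist}(\mathbf r_k(\alpha_i),L_\delta)\le\mathrm{dist}(\mathbf r_k(\alpha_i'),L_\delta)\le R_k+\delta$, whence $\delta\ge\delta_{\max}(\alpha_i,\ell_j,\ell_k)$.

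To prove \eqref{eq:monotone-dist} I would write $L=L_\delta$, $\mathbf x(\phi)=\ell_j(\cos\phi,\sin\phi)$ and $h(\phi)=\mathrm{dist}(\mathbf x(\phi),L)$. The cases $L=\varnothing$ (nothing to show) and $L$ a single point (which then lies on the positive $x$-axis, making $h$ visibly increasing on $(0,\pi)$) are disposed of at once, so assume $L$ is two-dimensional. Being a $1$-Lipschitz function composed with the Lipschitz curve $\phi\mapsto\mathbf x(\phi)$, $h$ is Lipschitz, hence absolutely continuous, so it is enough to check $h'(\phi)\ge0$ at a.e.\ $\phi$. Where $h(\phi)=0$ the value $0$ is the global minimum of $h\ge0$, so $h'(\phi)=0$ if it exists. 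Where $h(\phi)>0$ the point $\mathbf x(\phi)$ lies outside the closed convex set $L$, admits a unique nearest point $\mathbf q=(q_1,q_2)\in\partial L$, and $\mathrm{dist}(\cdot,L)$ is differentiable there with gradient $(\mathbf x(\phi)-\mathbf q)/h(\phi)$; using $\dot{\mathbf x}(\phi)=\mathbf x(\phi)^{\perp}$ (with $(a,b)^{\perp}:=(-b,a)$) and $\mathbf x(\phi)\cdot\dot{\mathbf x}(\phi)=0$, the chain rule gives
\begin{equation}\label{eq:hprime}
h'(\phi)=\frac{\bigl(\mathbf x(\phi)-\mathbf q\bigr)\cdot\dot{\mathbf x}(\phi)}{h(\phi)}=-\frac{\mathbf q\cdot\mathbf x(\phi)^{\perp}}{h(\phi)}=-\frac{\det[\mathbf x(\phi),\mathbf q]}{h(\phi)},
\end{equation}
where $\det[(a_1,a_2),(b_1,b_2)]:=a_1b_2-a_2b_1$. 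Thus I am left to verify $\det[\mathbf x(\phi),\mathbf q]\le0$ for $\phi\in(0,\pi)$.

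For this I would use the special form $L=B(\mathbf r_i,R_i+\delta)\cap B(\mathbf r_j,R_j+\delta)$ with $\mathbf r_i=0$. The vector $\mathbf x(\phi)-\mathbf q$ lies in the outer normal cone of $L$ at $\mathbf q$, which is spanned by the outward radial directions of whichever of the two balls is active at $\mathbf q$, i.e.\ by $\mathbf q$ (for $B(\mathbf r_i,R_i+\delta)$) and/or by $\mathbf q-\mathbf r_j$ (for $B(\mathbf r_j,R_j+\delta)$); hence $\mathbf x(\phi)=\mathbf q+a\,\mathbf q+b\,(\mathbf q-\mathbf r_j)=(1+a+b)\,\mathbf q-b\,\mathbf r_j$ for some $a,b\ge0$. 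With $\det[\mathbf q,\mathbf q]=0$ and $\mathbf r_j=(\ell_k,0)$ this gives $\det[\mathbf x(\phi),\mathbf q]=-b\,\det[\mathbf r_j,\mathbf q]=-b\,\ell_k\,q_2$, while the second coordinate of $\mathbf x(\phi)$ equals $(1+a+b)\,q_2$. Since $\phi\in(0,\pi)$ forces $(\mathbf x(\phi))_2=\ell_j\sin\phi>0$ and $1+a+b>0$, necessarily $q_2>0$, so $\det[\mathbf x(\phi),\mathbf q]=-b\,\ell_k\,q_2\le0$; by \eqref{eq:hprime}, $h'(\phi)\ge0$. This establishes \eqref{eq:monotone-dist}, and with it the proposition. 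The argument is symmetric in $(i,j,k)$, so it covers all enumerations as stated.

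I expect the main obstacle to lie not in the final sign computation — which is short, its content being that an active $B(\mathbf r_j,\cdot)$-constraint ($b>0$) forces the nearest point $\mathbf q$ into the closed upper half-plane $\{q_2\ge0\}$ — but in getting the reduction of the first paragraph exactly right: recognizing that only the third center moves and that it moves on a circle about $\mathbf r_i$, recasting non-emptiness of the triple intersection as the scalar inequality $\mathrm{dist}(\mathbf r_k,L_\delta)\le R_k+\delta$ with the lens $L_\delta$ frozen, and carefully disposing of the degenerate situations ($L_\delta$ empty or a single point, and the points where $h=0$ or where $h$ fails to be differentiable).
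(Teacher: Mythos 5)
Your proof is correct, and it takes a genuinely different route from the paper. The paper proves the monotonicity through the dichotomy of Prop.~\ref{prop:dichotomy}: it distinguishes three cases according to where the critical triple-intersection point sits (on a side adjacent to $\mathbf r_i$, in the interior of $\mathcal T$ as the center of an externally tangent Apollonius circle, or on the side opposite $\mathbf r_i$), proves monotonicity of $\delta_{\max}$ within each case by explicit law-of-cosines formulas such as those defining $\alpha_1^{(2)}$ and $\alpha_1^{(3)}$, and then argues that the three cases occur in a fixed order as $\alpha_1$ grows, before assembling the pieces by continuity. You instead freeze, for each fixed $\delta$, the lens $L_\delta=B(\mathbf r_i,R_i+\delta)\cap B(\mathbf r_j,R_j+\delta)$, observe that varying $\alpha_i$ only moves $\mathbf r_k$ on a circle about $\mathbf r_i$, recast non-emptiness of the triple intersection as $\mathrm{dist}(\mathbf r_k(\alpha_i),L_\delta)\le R_k+\delta$ via Lemma~\ref{lem_deltamaxUNIQUE}, and prove monotonicity of that distance by a convex-analysis computation: the projection characterization $\mathbf x-\mathbf q\in N_{L_\delta}(\mathbf q)$, the decomposition of the normal cone of the lens into nonnegative combinations of $\mathbf q-\mathbf r_i$ and $\mathbf q-\mathbf r_j$ (valid since the lens has nonempty interior, your separately treated degenerate cases covering the rest), and a.e.\ differentiation of the Lipschitz function $h$; the sign computation and the identity $\mathbf q\cdot\dot{\mathbf x}=\det[\mathbf x,\mathbf q]$ are all correct. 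What your argument buys is brevity and robustness: it needs neither the non-collinearity assumption nor the hard-core constraints $\ell_l\ge R_m+R_n$ (which the paper's step 3 uses to get $f_2'<0$), and it is independent of Prop.~\ref{prop:dichotomy}, Lemma~\ref{lem:aux1}--\ref{lem:aux3} and the Apollonius-circle machinery. What the paper's longer route buys is structural information reused elsewhere — the precise location of the critical intersection point and its identification with an externally tangent Apollonius circle, which feed directly into Prop.~\ref{theorem_dR} and the discussion of Fig.~\ref{fig:circle3} — information your softer distance argument does not provide.
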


The proof is given in Sec.~\ref{sec_proofPROP}. The proposition says that the decrease in an angle $\alpha_i$, at fixed adjacent side lengths  results in a decrease (or no change at all) of the maximal shell radius $\delta_{\max}$. Now, every triangle satisfying the hard-core constraints $\ell_i \geq R_j+ R_k$, $\{i,j,k\} = \{1,2,3\}$, can be mapped to a triangle in which $\ell_i = R_j+ R_k$, by a succession of two such angle decreasing moves. The only exception, explicitly treated in the proof of Corollary~\ref{cor_deltacr} below, is when an inner angle can be decreased until all centers are collinear, without creating any overlap among the balls.

Let 
\begin{equation} 
	\delta_{\mathrm{cr}}(R_1,R_2,R_3) = \delta_{\max}\bigl( B(\mathbf r'_1,R_1),B(\mathbf r'_2,R_2), B(\mathbf r'_3,R_3)\bigr)
	\label{eq:defdelta-cr}
\end{equation} 
be the critical shell radius in a configuration of non-overlapping disks that are in contact, i.e., 
$|\mathbf r'_i - \mathbf r'_j| = R_i + R_j$ for all distinct $i,j \in \{1,2,3\}$.  Notice that up to Euclidean isometries of the plane, there is a unique such triangle. 

\begin{corollary}[Critical shell radius for fixed radii] \label{cor_deltacr}
	For all radii $R_1,R_2,R_3>0$, and all  disks $B(\mathbf r_i, R_i)\subset \mathbb R^2$, $i=1,2,3$, that have disjoint interiors, we have 
	\begin{equation}
 		\delta_{\max}\bigl(B(\mathbf r_1,R_1),B(\mathbf r_2,R_2), B(\mathbf r_3,R_3)\bigr) \geq 		 \delta_{\mathrm{cr}}(R_1,R_2,R_3)\,.
	\end{equation} 
\end{corollary}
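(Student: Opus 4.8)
\emph{Strategy.} The plan is to show that, among all admissible configurations of three disks with prescribed radii $R_1,R_2,R_3$, the contact configuration minimizes the maximal shell radius $\delta_{\max}$. For non-degenerate configurations (centers not collinear) this will be done by deforming the triangle of centers towards the contact triangle through a finite chain of the angle-decreasing moves of Proposition~\ref{theorem_da}, along which $\delta_{\max}$ is non-increasing; degenerate (collinear) configurations are not covered by Proposition~\ref{theorem_da}, so I treat them by a direct computation and reattach them to the main argument by a semicontinuity estimate.

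\emph{The deformation in the non-degenerate case.} Let $\mathcal T$ be a genuine triangle of centers with angles $\alpha_i\in(0,\pi)$ and opposite sides $\ell_i=|\mathbf r_j-\mathbf r_k|$; call the pair $\{j,k\}$ \emph{in contact} when $\ell_i=R_j+R_k$, and note that admissibility forces $\ell_i\ge R_j+R_k$ always. If all three pairs are in contact then $\mathcal T$ is the contact triangle and $\delta_{\max}(\mathcal T)=\delta_{\mathrm{cr}}(R_1,R_2,R_3)$. Otherwise pick a non-contact pair $\{j,k\}$ and decrease the opposite angle $\alpha_i$ while keeping $\ell_j,\ell_k$ (hence the other two center distances) fixed; by the law of cosines $\ell_i^2=\ell_j^2+\ell_k^2-2\ell_j\ell_k\cos\alpha_i$ decreases continuously towards $|\ell_j-\ell_k|$. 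If $|\ell_j-\ell_k|<R_j+R_k$ we stop at $\ell_i=R_j+R_k$ with $\alpha_i>0$: the result is again an admissible triangle with exactly one more pair in contact (the status of the other two pairs is unchanged since their distances were held fixed), and $\delta_{\max}$ did not increase by Proposition~\ref{theorem_da}. Iterating strictly increases the number of contacts, so after at most three moves we reach the contact triangle and obtain $\delta_{\max}(\mathcal T)\ge\delta_{\mathrm{cr}}$. The only obstruction is that at some step every available move may satisfy $|\ell_j-\ell_k|\ge R_j+R_k$; then decreasing $\alpha_i$ forces $\alpha_i\downarrow 0$ and the deformation ends at a configuration with collinear centers, which is still admissible because the limiting distance $|\mathbf r_j-\mathbf r_k|=|\ell_j-\ell_k|$ is $\ge R_j+R_k$.

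\emph{The collinear case.} Here I would argue directly. Order the disks $a,b,c$ along the line with $b$ the middle one. For $\delta<\tfrac12\bigl(|\mathbf r_a-\mathbf r_c|-R_a-R_c\bigr)$ the dilated disks around $\mathbf r_a$ and $\mathbf r_c$ are disjoint, so the triple intersection is empty; at $\delta=\tfrac12\bigl(|\mathbf r_a-\mathbf r_c|-R_a-R_c\bigr)$ these two dilated disks meet in a single point of the segment $[\mathbf r_a,\mathbf r_c]$, and a short computation using $|\mathbf r_a-\mathbf r_b|\ge R_a+R_b$ and $|\mathbf r_b-\mathbf r_c|\ge R_b+R_c$ shows this point also lies in $B(\mathbf r_b,R_b+\delta)$. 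Hence $\delta_{\max}=\tfrac12\bigl(|\mathbf r_a-\mathbf r_c|-R_a-R_c\bigr)$, and since $|\mathbf r_a-\mathbf r_c|=|\mathbf r_a-\mathbf r_b|+|\mathbf r_b-\mathbf r_c|\ge R_a+2R_b+R_c$ this gives $\delta_{\max}\ge R_b\ge\min_i R_i$. On the other hand, in the contact triangle every center satisfies $\mathbf r_i\in\bigcap_j B(\mathbf r_j,R_j+R_i)$ because $|\mathbf r_i-\mathbf r_j|=R_i+R_j$; Lemma~\ref{lem_deltamaxUNIQUE} then yields $\delta_{\mathrm{cr}}(R_1,R_2,R_3)\le R_i$ for every $i$, i.e.\ $\delta_{\mathrm{cr}}\le\min_iR_i$. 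Combining, $\delta_{\max}\ge\delta_{\mathrm{cr}}$ for every collinear configuration as well.

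\emph{Gluing and the main obstacle.} To close the exceptional branch of the deformation I need $\delta_{\max}(\mathcal T)$ to dominate $\delta_{\max}$ of the limiting collinear configuration. Proposition~\ref{theorem_da} gives $\delta_{\max}(\mathcal T)\ge\delta_{\max}(\alpha_i,\ell_j,\ell_k)$ for all small $\alpha_i>0$, and letting $\alpha_i\downarrow 0$ the desired inequality follows from lower semicontinuity of $\delta_{\max}$ in the configuration: if points $\mathbf x_n$ lie in the triple intersection for configurations converging to one whose triple intersection is empty, a subsequential limit of $(\mathbf x_n)$ would lie in that empty set — the same compactness argument already used in the proof of Lemma~\ref{lem_deltamaxUNIQUE}. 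I expect this gluing to be the delicate point: Proposition~\ref{theorem_da} is stated only for $\alpha_i\in(0,\pi)$, so the collinear configurations must be analyzed by hand and stitched back in via semicontinuity, and one must carefully verify that the chain of angle-decreasing moves can always be carried out and that the escape to collinearity always lands in an admissible (interior-disjoint) configuration. The remainder — the monotone march to the contact triangle for a generic triangle — is routine once Proposition~\ref{theorem_da} is available.
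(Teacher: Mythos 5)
Your proof is correct and follows essentially the same route as the paper: deform the triangle of centers towards the contact configuration by angle-decreasing moves using the monotonicity of Proposition~\ref{theorem_da}, and treat the escape to collinear centers separately via the bound by the middle disk's radius. The only differences are minor and harmless: you justify the degenerate limit by lower semicontinuity of $\delta_{\max}$ (the compactness argument of Lemma~\ref{lem_deltamaxUNIQUE}) where the paper asserts continuity of $\alpha_1\mapsto\delta_{\max}$, and you obtain $\delta_{\mathrm{cr}}\leq\min_i R_i$ by an elementary containment argument at the contact configuration instead of citing the explicit formula of Proposition~\ref{theorem_dR}.
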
 

\begin{proof}
	If the configuration of non-overlapping disks can be turned into a configuration of disks at contact by a sequence of angle-decreasing moves, then the inequality follows from the monotonicity in Prop.~\ref{theorem_da}. 
	If the configuration is instead turned into a configuration of non-overlapping disks with collinear centers, corresponding to a degenerate triangle with one angle $\pi$ and two zero angles, we notice that the monotonicity from Prop.~\ref{theorem_da} extends to the closed interval $[0,\pi]$ by the continuity of the map $\alpha_1\mapsto \delta_{\max}(\alpha_1,\ell_2,\ell_3)$ and therefore the  maximal shell radius $\delta_{\mathrm{max}}$ of the original configuration is still bounded from below by the maximal radius of the aligned disks after decreasing the angle to zero. 
	The latter is easily seen to be larger or equal to the radius of the middle disk, which is clearly larger than the critical radius $\delta_{\mathrm{cr}}(R_1,R_2,R_3)$ by the explicit formula provided in Prop.~\ref{theorem_dR}. Alternatively,  to directly arrive at the critical configuration with all three disks in mutual contact, the maximal shell radius for aligned disks can be made smaller by shrinking distances so that the middle one of the aligned disks is in contact with its two neighbors, and then the shell radius is further decreased by decreasing the angle $\pi$ of the degenerate triangle until we end up with a configuration in which all disks are in contact. 
\end{proof}

The critical shell radius $\delta_{\mathrm{cr}}$ for disks that are in contact is given by an explicit formula.  In Sec.~\ref{sec_proofPROP} we prove that it is equal to the radius of the inner Soddy circle, which can be computed with Descartes' circle theorem, see Ref.~\cite{coxeter68, oldknow96} and references therein.

\begin{proposition} [Critical shell radius at contact] \label{theorem_dR}
	For all $R_1,R_2,R_3$, we have 
	\begin{equation} 
		\delta_{\mathrm{cr}}(R_1,R_2,R_3) = \Bigl( R_1^{-1} + R_2^{-1} + R_3^{-1} + 2 \sqrt{(R_1R_2)^{-1} +(R_2 R_3)^{-1} + (R_1R_3)^{-1}}\Bigr)^{-1}.  
	\end{equation} 
\end{proposition}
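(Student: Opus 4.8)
The plan is to show that $\delta_{\mathrm{cr}}(R_1,R_2,R_3)$ is exactly the radius of the inner Soddy (crevice) circle of the three mutually tangent circles $\partial B(\mathbf r_i',R_i)$, and then read off the closed formula from Descartes' circle theorem. First I would pass to the plane: by Lemma~\ref{lem:reduced} it suffices to work in the two-dimensional affine subspace $\mathcal P$ spanned by the three centers $\mathbf r_i'$, and since $|\mathbf r_i'-\mathbf r_j'| = R_i+R_j$ for all distinct $i,j$ with strictly positive radii, the centers are not collinear, so they span a genuine triangle $\mathcal T$. Writing $g_i(\mathbf p) := |\mathbf p-\mathbf r_i'| - R_i$ and $f(\mathbf p):=\max_{i=1,2,3} g_i(\mathbf p)$, one has $\bigcap_{i=1}^3 B(\mathbf r_i',R_i+\delta)\neq\varnothing$ iff $\min_{\mathbf p} f(\mathbf p)\le\delta$, so that by Lemma~\ref{lem_deltamaxUNIQUE} and its definition, $\delta_{\mathrm{cr}} = \min_{\mathbf p\in\mathcal P} f(\mathbf p)$. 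The function $f$ is convex and coercive, hence the minimum is attained at some $\mathbf p^*$; moreover $f\ge 0$, with $f(\mathbf p)=0$ only at a point common to all three closed disks, which does not exist because the three tangency points are distinct for positive radii. Thus $\delta_{\mathrm{cr}} = f(\mathbf p^*)>0$.

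Second I would identify $\mathbf p^*$. First-order optimality for the convex function $f$ at $\mathbf p^*$ gives $0\in\mathrm{conv}\{\,(\mathbf p^*-\mathbf r_i')/|\mathbf p^*-\mathbf r_i'| : i\in A\,\}$, where $A$ is the active set $\{i:\, g_i(\mathbf p^*)=f(\mathbf p^*)\}$ (and $\mathbf p^*\neq\mathbf r_i'$, so these gradients are genuine unit vectors). The case $|A|=1$ is impossible since a unit vector is nonzero. The case $|A|=2$, say $A=\{1,2\}$, forces $\mathbf p^*$ onto the segment $[\mathbf r_1',\mathbf r_2']$, whence $g_1(\mathbf p^*)+g_2(\mathbf p^*) = |\mathbf r_1'-\mathbf r_2'|-R_1-R_2 = 0$ by the contact relation; combined with $g_1(\mathbf p^*)=g_2(\mathbf p^*)=f(\mathbf p^*)$ this gives $f(\mathbf p^*)=0$, contradicting $\delta_{\mathrm{cr}}>0$. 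Hence $A=\{1,2,3\}$: all three defects coincide, $|\mathbf p^*-\mathbf r_i'|=R_i+\delta_{\mathrm{cr}}$ for $i=1,2,3$, i.e., the circle $C(\mathbf p^*,\delta_{\mathrm{cr}})$ is externally tangent to each $\partial B(\mathbf r_i',R_i)$. Rewriting $0\in\mathrm{conv}\{u_i\}$ as $\mathbf p^*=\sum_i s_i\mathbf r_i'$ with $s_i\ge 0$, $\sum_i s_i =1$, and excluding the boundary of $\mathcal T$ by the same segment argument as above, we conclude $\mathbf p^*\in\mathrm{int}\,\mathcal T$.

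Third I would invoke Descartes. The four circles $C(\mathbf p^*,\delta_{\mathrm{cr}})$ and $\partial B(\mathbf r_i',R_i)$, $i=1,2,3$, are mutually tangent, so Descartes' circle theorem \cite{coxeter68,oldknow96} yields $(k_1+k_2+k_3+k)^2 = 2(k_1^2+k_2^2+k_3^2+k^2)$ with $k_i = R_i^{-1}$ and $k=\delta_{\mathrm{cr}}^{-1}$, hence $k = k_1+k_2+k_3\pm 2\sqrt{k_1k_2+k_2k_3+k_1k_3}$. At most two circles are externally tangent to all three of the $\partial B(\mathbf r_i',R_i)$ — the inner Soddy circle, with the larger curvature (the ``$+$'' root), and, in certain size regimes, a second one — but only the inner Soddy circle has its center in $\mathcal T$, which pins the value of $\delta_{\mathrm{cr}}$ to the ``$+$'' root (equivalently this is forced by convexity, since $f$ has a unique minimum value and the center of the second circle, lying outside $\mathcal T$, cannot be a critical point of $f$). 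Substituting $k_i = R_i^{-1}$ and inverting gives the claimed expression for $\delta_{\mathrm{cr}}$.

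The main obstacle is the middle step: showing that the minimizer $\mathbf p^*$ is the equidistant (Soddy) point rather than a pairwise-tangency point, and that it corresponds to the inner Soddy circle specifically. Here the contact relations $|\mathbf r_i'-\mathbf r_j'|=R_i+R_j$ are exactly what makes the two-active-constraint case collapse, and the location $\mathbf p^*\in\mathrm{int}\,\mathcal T$ is exactly what selects the correct Descartes root. The remaining ingredients — the reduction to $\mathbb R^2$, existence of the minimizer, and Descartes' theorem — are either already available in the excerpt or classical.
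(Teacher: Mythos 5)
Your proof is correct, but it reaches the key intermediate fact by a genuinely different route than the paper. The paper gets ``the critical configuration is a circle externally tangent to all three circles with center in the interior of $\mathcal T$'' for free from its dichotomy result (Prop.~\ref{prop:dichotomy}): at contact all pairwise thresholds $\delta_i$ vanish, so case~1 applies and the triple intersection at $\delta_{\max}$ is a single interior point equidistant-in-defect from the three circles; the dichotomy itself was established earlier via the lens-geometry Lemmas~\ref{lem:aux1}--\ref{lem:aux3}. You instead bypass Prop.~\ref{prop:dichotomy} entirely: you identify $\delta_{\mathrm{cr}}$ with the minimum of the convex, coercive function $f(\mathbf p)=\max_i(|\mathbf p-\mathbf r_i'|-R_i)$ (legitimate via Lemmas~\ref{lem_deltamaxUNIQUE} and~\ref{lem:reduced}), and use the subdifferential condition $0\in\mathrm{conv}\{\nabla g_i(\mathbf p^*):i\in A\}$ together with the contact relations $\ell_k=R_i+R_j$ to kill the one- and two-active-constraint cases and to place $\mathbf p^*$ in $\mathrm{int}\,\mathcal T$; this is a clean, self-contained convex-analysis substitute for the paper's case analysis, at the price of not reusing machinery the paper needs anyway for Prop.~\ref{theorem_da}. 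The final step is the same in both: Descartes' circle theorem plus the selection of the inner Soddy (``$+$'') root by the requirement that the center lie inside $\mathcal T$; note that your assertion that any second externally tangent circle has its center outside $\mathcal T$ is left unproven (it is true, e.g.\ because $\mathcal T$ is covered by the three disks together with the inner curvilinear void, and the second solution lies in the other component of the complement), but this is at the same level of asserted classical geometry as the paper's ``clearly the solution that interests us is the inner circle''; your parenthetical ``equivalently by convexity'' remark, however, is circular as phrased, since it presupposes exactly that location fact, so it should be dropped or replaced by the observation that the minimizer's value must be the smaller of the two admissible radii.
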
 

The proposition is proven in Sec.~\ref{sec_proofPROP} by establishing a relation with the Apollonius problem of tangent circles \cite[Chapter VI]{johnson60}. Remember that the latter consists in finding a circle that is tangent to three given non-intersecting circles. Descartes' circle theorem concerns the special case where the three given circles in the Apollonius problem are in contact in three different points. This special case is sometimes called four-coins problem \cite{oldknow96}.  

Prop.~\ref{theorem_dR} implies right away that $\delta_{\mathrm{cr}}$ is strictly increasing in each $R_i$. In particular, setting $R:= \min (R_1,R_2,R_3)$, we get 
\begin{equation} \label{eq:mindelta-cr}
	\delta_{\mathrm{cr}}(R_1,R_2,R_3) \geq \delta_{\mathrm{cr}}(R,R,R) 
	=\Bigl(\frac{2}{\sqrt 3} - 1\Bigr) R.
\end{equation} 
We are now ready to prove Theorem~\ref{thm:main2}. 

\begin{proof}[Proof of Theorem~\ref{thm:main2}] 
	In view of Lemma~\ref{lem:reduced} it is enough to treat the two-dimensional case. Thus let $n=2$. 
	Suppose that $\delta < (\frac{2}{\sqrt 3} -1) \min (R_1,R_2,R_3)$. Then Eq.~\eqref{eq:mindelta-cr} and Corollary~\ref{cor_deltacr} yield
	\begin{equation} 
		\delta < \delta_{\max}\bigl( B(\mathbf r_1, R_1+\delta),B(\mathbf r_2, R_2+\delta), B(\mathbf r_3, R_3+\delta)\bigr).
	\end{equation}
	By the definition of $\delta_{\max}$ in Lemma~\ref{lem_deltamaxUNIQUE}, it follows that the intersection $\cap_{i=1}^3 B(\mathbf r_i, R_i+\delta)$ is empty. This proves the implication (i) $\Rightarrow$ (ii).
	
 Now suppose that $\cap_{i=1}^3 B(\mathbf r_i, R_i+\delta)=\varnothing$ for all positions $\mathbf{r}_i\in \mathbb R^n$.
	In particular, the relation $\cap_{i=1}^3 B(\mathbf r'_i, R+\delta)=\varnothing$ also holds for three balls with radius $R=\min (R_1,R_2,R_3)$ at contact, i.e., 
$|\mathbf r'_i - \mathbf r'_j| = 2R$ for all distinct $i,j \in \{1,2,3\}$. Then the converse implication follows directly from Eq.~\eqref{eq:mindelta-cr} and the uniqueness of $\delta_{\mathrm{cr}}$, defined in Eq.~\eqref{eq:defdelta-cr}, which is implied in Lemma~\ref{lem_deltamaxUNIQUE}.
\end{proof}

We conclude with a sufficient result on absence of triplet intersections for three bodies $C_i$, $i=1,2,3$, when one of them, say $C_3$,  is a half-plane, modeling a hard wall.
A half-plane may be regarded as a ball with infinite radius.  Setting $R:= \min (R_1,R_2)$, we get instead of Eq.~\eqref{eq:mindelta-cr}
\begin{equation} \label{eq:mindelta-crWALL}
	\lim_{R_\text{w}\rightarrow\infty}\delta_{\mathrm{cr}}(R_1,R_2,R_\text{w}) \geq \lim_{R_\text{w}\rightarrow\infty}\delta_{\mathrm{cr}}(R,R,R_\text{w}) 
	=\frac{ R}{4}.
\end{equation} 
which leads to the following theorem. By closed half-space we mean a set of the form $\{ \mathbf x \in \mathbb R^n:\, \mathbf x\cdot \mathbf n \leq c\}$ for some $\mathbf n\in \mathbb R^n$ and some constant $c\in \mathbb R$. 

\begin{theorem} \label{thm:main1WALL} 
Fix $n\geq 2$ and $\kappa \geq 3$.
	Let $C_i$, $i=1,2,\ldots,\kappa-1$ be compact convex subsets of $\mathbb R^n$ and $C_\kappa$ a closed half-space.  Suppose that the $C_i$, $i\leq \kappa -1$ have positive rolling radii $\mathrm{Roll}(C_i)>0$ and that 
	\begin{equation} 
		\delta < \frac{1}{4} \min_{i=1,\ldots,\kappa} \mathrm{Roll}(C_i)\,.
		\label{eq_deltaMOmathGENWALL}
	\end{equation} 
	Then the intersection of the dilated sets is empty: 
	$
		\bigcap_{i=1}^\kappa C_i(\delta) =  \varnothing. 
	$
	The condition~\eqref{eq_deltaMOmathGENWALL} with inequality instead of strict inequality is sufficient to guarantee that the intersection has zero volume. 
\end{theorem}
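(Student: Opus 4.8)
The plan is to run the proof of Theorem~\ref{thm:main1} essentially unchanged, replacing the appeal to Theorem~\ref{thm:main2} by its ``hard-wall'' counterpart: for two closed balls $B(\mathbf r_1,R),B(\mathbf r_2,R)\subset\mathbb R^n$ of equal radius $R>0$ and a closed half-space $H$, all three with pairwise disjoint interiors, one has $B(\mathbf r_1,R+\delta)\cap B(\mathbf r_2,R+\delta)\cap H(\delta)=\varnothing$ whenever $\delta<R/4$. (As in Theorem~\ref{thm:main1}, one must assume the bodies $C_i$ to have pairwise disjoint interiors; otherwise two of them could touch at a point lying on the wall, and then $\bigcap_i C_i(\delta)$ is non-empty for every $\delta$, which is still compatible with the weaker zero-volume claim but not with emptiness.) It should be stressed that one cannot merely quote Theorem~\ref{thm:main2} with a huge ball in place of $H$: that route only delivers $\delta\ge(2/\sqrt3-1)R$, which is too weak; one genuinely needs the configuration-dependent bound $\delta_{\max}\ge\delta_{\mathrm{cr}}(R,R,R_\text{w})$ from Corollary~\ref{cor_deltacr} together with the limit $\delta_{\mathrm{cr}}(R,R,R_\text{w})\to R/4$ as $R_\text{w}\to\infty$ recorded in Eq.~\eqref{eq:mindelta-crWALL}.

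First I would prove the hard-wall counterpart. Suppose a point $\mathbf r$ lies in $B(\mathbf r_1,R+\delta)\cap B(\mathbf r_2,R+\delta)\cap H(\delta)$; write $H=\{\mathbf x:\mathbf x\cdot\mathbf n\le c\}$ with $|\mathbf n|=1$, let $\mathbf p$ be the projection of $\mathbf r$ onto $H$, and for each $R_\text{w}>0$ set $\mathbf x_\text{w}:=\mathbf p-R_\text{w}\mathbf n$ and $B_\text{w}:=B(\mathbf x_\text{w},R_\text{w})$. A short computation shows $B_\text{w}\subseteq H$, so $\mathrm{int}(B_\text{w})\subseteq\mathrm{int}(H)$ is disjoint from the interiors of $B(\mathbf r_1,R)$ and $B(\mathbf r_2,R)$, and $|\mathbf r-\mathbf x_\text{w}|=\mathrm{dist}(\mathbf r,H)+R_\text{w}\le\delta+R_\text{w}$, so $\mathbf r\in B_\text{w}(\delta)$. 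Hence for every $R_\text{w}>0$ the three balls $B(\mathbf r_1,R),B(\mathbf r_2,R),B_\text{w}$ have disjoint interiors and their $\delta$-dilations share the point $\mathbf r$; passing to a two-dimensional plane through the three centers (Lemma~\ref{lem:reduced}) and invoking Lemma~\ref{lem_deltamaxUNIQUE} and Corollary~\ref{cor_deltacr} in that plane yields $\delta\ge\delta_{\mathrm{cr}}(R,R,R_\text{w})$. Since this holds for all $R_\text{w}$ while $\delta_{\mathrm{cr}}(R,R,R_\text{w})\to R/4$ by Eq.~\eqref{eq:mindelta-crWALL}, we get $\delta\ge R/4$, contradicting $\delta<R/4$.

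Granting this, the rest should follow the proof of Theorem~\ref{thm:main1} almost line by line. A half-space has infinite rolling radius, so $R_{\min}:=\min_{i\le\kappa}\mathrm{Roll}(C_i)=\min_{i\le\kappa-1}\mathrm{Roll}(C_i)$, and since $\bigcap_{i=1}^\kappa C_i(\delta)\subseteq C_1(\delta)\cap C_2(\delta)\cap C_\kappa(\delta)$ it suffices to treat $\kappa=3$ with $C_3=H$ a half-space. If $\mathbf r\in C_1(\delta)\cap C_2(\delta)\cap H(\delta)$, then the case analysis from the proof of Theorem~\ref{thm:main1} (project onto $C_1$ and $C_2$ and use the rolling radius wherever $\mathbf r$ is outside the respective interior; use the maximal-inscribed-ball argument wherever $\mathbf r$ is in the interior of $C_1$ or $C_2$; the case $\mathbf r\in\mathrm{int}(H)$ reduces to the former for $C_1,C_2$) produces two balls of radius $R_{\min}$, contained in $C_1$ and in $C_2$ respectively, with pairwise disjoint interiors (also disjoint from $\mathrm{int}(H)$), whose $\delta$-dilations contain $\mathbf r$. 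Applying the hard-wall counterpart to these two balls and $H$ forces $\delta\ge R_{\min}/4$, contradicting $\delta<\tfrac14 R_{\min}$; hence $\bigcap_{i=1}^\kappa C_i(\delta)=\varnothing$. For the zero-volume statement only the borderline case $\delta=\tfrac14 R_{\min}$ remains, and the continuity argument closing the proof of Theorem~\ref{thm:main1} carries over verbatim: $\bigcap_{i=1}^\kappa C_i(\delta)$ is convex (the half-space dilates to a half-space), its interior consists of the points at distance $<\delta$ from every $C_i$, and each such point lies in $\bigcap_i C_i(\delta')$ for some $\delta'<\delta$, which is empty by what was just shown; so the interior is empty, the convex set lies in an affine hyperplane, and its Lebesgue measure is zero.

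The main obstacle is the hard-wall counterpart, and within it the $R_\text{w}\to\infty$ passage: one has to exhaust the half-space ``from the correct side'' by inscribed balls $B_\text{w}$ whose $\delta$-dilations still capture the prescribed point $\mathbf r$ while all three interiors stay disjoint, and then take the limit in the lower bound $\delta\ge\delta_{\mathrm{cr}}(R,R,R_\text{w})$, which is legitimate precisely because $\delta_{\mathrm{cr}}(R,R,R_\text{w})\to R/4$, Eq.~\eqref{eq:mindelta-crWALL}. Everything else is a routine transcription of the reduction to three bodies and of the interior/boundary case split in the proof of Theorem~\ref{thm:main1}.
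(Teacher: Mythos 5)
Your proposal is correct and follows exactly the route the paper indicates for its (omitted) proof: the argument of Theorem~\ref{thm:main1} combined with a two-balls-plus-half-space version of Theorem~\ref{thm:main2} with threshold $R/4$, which you obtain by exhausting the half-space with inscribed balls $B_\text{w}$ of radius $R_\text{w}$, applying Lemma~\ref{lem:reduced}, Lemma~\ref{lem_deltamaxUNIQUE} and Corollary~\ref{cor_deltacr}, and letting $R_\text{w}\to\infty$ via Eq.~\eqref{eq:mindelta-crWALL} -- a clean way of making the paper's sketch rigorous. Your remark that pairwise disjoint interiors (including disjointness from the half-space) must be assumed, as in Theorem~\ref{thm:main1}, correctly identifies a hypothesis left implicit in the statement.
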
 

The proof is omitted as it is similar to the proofs of Theorem~\ref{thm:main1} and a version of Theorem~\ref{thm:main2} for two balls and a half-plane,
 involving the necessary and sufficient condition
	\begin{equation} 
		\delta < \frac{1}{4} R\,
		\label{eq_deltaMOmathWALL}
	\end{equation} 
	 for $\delta>0$ and $R>0$,
	such that for all closed balls $B(\mathbf{r_i}, R_i)\subset \mathbb R^n$, $i=1,2$ with $R_i>0$ and centers $\mathbf{r}_i\in \mathbb R^n$ that have disjoint interiors and satisfy 
		$\min(R_1,R_2) \geq R$
		the dilated balls $B(\mathbf{r}_i, R_i+\delta)$ have empty triple intersection.

\section{Proof of Propositions~\ref{theorem_da} and~\ref{theorem_dR}} \label{sec_proofPROP}

Throughout this section we work in dimension $n=2$. The proof of Propositions~\ref{theorem_da} builds on explicit formulas for $\delta_{\max}$, governed by a case distinction. The guiding idea is the following. Fix three non-overlapping disks $B(\mathbf r_i, R_i)$, $i=1,2,3$. For very small $\delta$, all pairwise intersections $B(\mathbf r_i, R_i+\delta)\cap B(\mathbf r_j, R_j +\delta)$, $i\neq j$, vanish. As $\delta$ increases, typically at some point one of the pairwise intersections becomes non-empty, then another one, and finally all three of them. Once all three of them are non-empty, there are two possibilities: either the regions of pairwise intersection in turn intersect among themselves, or they do not. In the first case the region of triple intersection is in fact non-empty and we have found $\delta_{\max}$. In the second case we need to increase $\delta$ further before we reach $\delta_{\max}$; we shall see that in this case, $\delta_{\max}$ is the radius of one of the Apollonius circles. 

To avoid burdensome considerations of borderline cases,  we assume throughout this section that the centers $\mathbf r_i$, $i=1,2,3$ are not collinear (i.e., they do not lie on a common line). 
We note, however, that the results extend to  collinear centers as well if we allow for degenerate triangles with angles
 $\alpha_i=0$, $\alpha_j=0$ and $\alpha_k=\pi$. These cases are accounted for in Corollary~\ref{cor_deltacr}.

\subsection{Critical shell radius vs.\ Apollonius circles}

Remember the angles $\alpha_i$ and the side lengths $\ell_i$ of the triangle $\mathcal T$ with corners $\mathbf r_i$, $i=1,2,3$, see Fig.~\ref{fig:circle3_arb}. For $\{i,j,k\} = \{1,2,3\}$, set
\begin{equation}\label{eq_deltai}
	\delta_i:= \frac12 (\ell_i - R_k - R_j)\geq 0. 
\end{equation} 	
Clearly 
\begin{equation}  \label{eq_notripletI2}
	B(\mathbf r_j, R_j+\delta) \cap B(\mathbf r_k, R_k+\delta) \neq \varnothing\ \Leftrightarrow\ \delta \geq  \delta_i.
\end{equation}
We may label the corners of the triangle in such a way that the thresholds $\delta_i$ for pairwise intersections are ordered as 
\begin{equation} 
	\delta_1\leq \delta_2\leq \delta_3.
\end{equation}
If the triple intersection $\cap_{i=1}^3 B(\mathbf r_i, R_i+\delta)$ is non-empty, then necessarily all pairwise intersections are non-empty, therefore $\delta_{\max} = \delta_{\max}\bigl( B(\mathbf r_1, R_1+\delta), B(\mathbf r_2, R_2+\delta), B(\mathbf r_3, R_3+\delta)\bigr)$ satisfies
\begin{equation}
	\delta_{\max} \geq \max (\delta_1,\delta_2,\delta_3) = \delta_3. 
\end{equation} 

The main result of this subsection is the following dichotomy. 

\begin{proposition} \label{prop:dichotomy}
Let $B(\mathbf r_i, R_i)$, $i=1,2,3$ be three disks in $\mathbb R^2$ with pairwise disjoint interiors and non-collinear centers. Then, either one or the other of the following two cases occurs:
\begin{enumerate} 
	\item There exists a circle $\partial B(\mathbf p, \delta^*)$ centered in the interior of the triangle $\mathcal T$ with $|\mathbf p - \mathbf r_i| = R_i +\delta^*$ for $i=1,2,3$, in particular the circle is  externally tangent to the three circles $\partial B(\mathbf r_i, R_i)$. 
	We have 
	\begin{equation} 
		\delta_{\max} = \delta^*> \max(\delta_1,\delta_2,\delta_3)
	\end{equation} 
	and the triple intersection $\cap_{i=1}^3 B(\mathbf r_i, R_i +\delta_{\max})$.
	is the singleton $\{\mathbf p\}$. 
	\item There is no such circle, $\delta_{\max}$ is equal to $\max (\delta_1,\delta_2,\delta_3)$ 
	and the triple intersection is a  single point $\mathbf p$ on the boundary of the triangle. 
\end{enumerate} 
\end{proposition}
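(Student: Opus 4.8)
The plan is to recast the maximal shell radius as the minimum of a single convex function and read off the geometry from first‑order optimality conditions. Set $f_i(\mathbf x):=|\mathbf x-\mathbf r_i|-R_i$ and $g(\mathbf x):=\max_{i=1,2,3}f_i(\mathbf x)$; since $B(\mathbf r_i,R_i+\delta)=\{\mathbf x:f_i(\mathbf x)\le\delta\}$, the triple intersection $\mathcal I(\delta):=\bigcap_{i=1}^3 B(\mathbf r_i,R_i+\delta)$ is exactly the sublevel set $\{g\le\delta\}$, and Lemma~\ref{lem_deltamaxUNIQUE} ($\mathcal I(\delta)=\varnothing\Leftrightarrow\delta<\delta_{\max}$, with $\mathcal I(\delta_{\max})\neq\varnothing$) becomes the identity $\delta_{\max}=\min_{\mathbf x\in\mathbb R^2}g(\mathbf x)$, the minimum being attained precisely on $\mathcal I(\delta_{\max})$. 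Each $f_i$ is convex, hence so is $g$, and the minimizer is unique: if $\mathbf p\neq\mathbf q$ both minimized $g$, then for every $i$ at least one of the two inequalities in $|\tfrac{\mathbf p+\mathbf q}{2}-\mathbf r_i|\le\tfrac12\bigl(|\mathbf p-\mathbf r_i|+|\mathbf q-\mathbf r_i|\bigr)\le R_i+\delta_{\max}$ is strict (equality throughout would force $\mathbf p-\mathbf r_i=\mathbf q-\mathbf r_i$, hence $\mathbf p=\mathbf q$), so $g(\tfrac{\mathbf p+\mathbf q}{2})<\delta_{\max}$, a contradiction. Writing $\mathbf p$ for this unique minimizer, we get $\mathcal I(\delta_{\max})=\{\mathbf p\}$, which already establishes the ``single point'' clause in both cases of the dichotomy.

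Next I would locate $\mathbf p$ relative to the triangle $\mathcal T=\mathrm{conv}\{\mathbf r_1,\mathbf r_2,\mathbf r_3\}$. Since the metric projection onto the convex set $\mathcal T$ does not increase the distance to any of the vertices $\mathbf r_i\in\mathcal T$, one has $g(\Pi_{\mathcal T}\mathbf x)\le g(\mathbf x)$ for all $\mathbf x$, so $\min g$ is attained inside $\mathcal T$, whence $\mathbf p\in\mathcal T$. The dichotomy in the Proposition is then precisely $\mathbf p\in\mathrm{int}\,\mathcal T$ (Case 1) versus $\mathbf p\in\partial\mathcal T$ (Case 2). For the tangency statement, let $A=\{i:f_i(\mathbf p)=g(\mathbf p)\}$ be the active set; one first checks $\mathbf p\neq\mathbf r_i$ for all $i$ (if $\mathbf p=\mathbf r_1$, non‑overlap gives $f_1(\mathbf p)<f_2(\mathbf p),f_3(\mathbf p)$, so $1\notin A$, and then optimality forces $\nabla f_2(\mathbf p)=-\nabla f_3(\mathbf p)$, i.e.\ collinear centers, excluded). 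Hence $\partial g(\mathbf p)=\mathrm{conv}\{\nabla f_i(\mathbf p):i\in A\}$ with $\nabla f_i(\mathbf p)=(\mathbf p-\mathbf r_i)/|\mathbf p-\mathbf r_i|$, and $0\in\partial g(\mathbf p)$ reads $0=\sum_{i\in A}t_i\,(\mathbf p-\mathbf r_i)/|\mathbf p-\mathbf r_i|$ with $t_i\ge0$, $\sum t_i=1$; dividing each term by $|\mathbf p-\mathbf r_i|$ exhibits $\mathbf p$ as a convex combination of those $\mathbf r_i$ with $t_i>0$. Therefore $\mathbf p\in\mathrm{int}\,\mathcal T$ forces all three $t_i>0$, so $A=\{1,2,3\}$, i.e.\ $|\mathbf p-\mathbf r_i|=R_i+\delta_{\max}$ for every $i$: the circle $\partial B(\mathbf p,\delta^*)$, $\delta^*:=\delta_{\max}$, is externally tangent to all three $\partial B(\mathbf r_i,R_i)$ and centered in $\mathrm{int}\,\mathcal T$ (it is one of the Apollonius solution circles) — Case 1.

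In Case 2, $\mathbf p$ cannot be a vertex (just shown), so it lies in the relative interior of a unique edge, say $[\mathbf r_1,\mathbf r_2]$; then $\nabla f_1(\mathbf p)=-\nabla f_2(\mathbf p)$, and eliminating the possible active sets ($1\notin A$ would give, as above, $\nabla f_2(\mathbf p)=-\nabla f_3(\mathbf p)$, putting $\mathbf p$ on $[\mathbf r_2,\mathbf r_3]$ as well, hence at the vertex $\mathbf r_2$) shows $1,2\in A$. Thus $|\mathbf p-\mathbf r_1|=R_1+\delta_{\max}$ and $|\mathbf p-\mathbf r_2|=R_2+\delta_{\max}$, and since $\mathbf p\in[\mathbf r_1,\mathbf r_2]$ we get $\ell_3=|\mathbf p-\mathbf r_1|+|\mathbf p-\mathbf r_2|=R_1+R_2+2\delta_{\max}$, i.e.\ $\delta_{\max}=\delta_3=\tfrac12(\ell_3-R_1-R_2)$ with $\delta_3$ as in Eq.~\eqref{eq_deltai}. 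Combined with $\delta_{\max}\ge\max(\delta_1,\delta_2,\delta_3)$, this yields $\delta_{\max}=\max(\delta_1,\delta_2,\delta_3)$, and $\mathbf p\in\partial\mathcal T$ is the asserted single boundary point.

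It remains to settle the two strict statements. For $\delta^*>\max(\delta_1,\delta_2,\delta_3)$ in Case 1: if $\delta_{\max}=\delta_i$ for some $i$ — say $\delta_{\max}=\delta_3$, the threshold at which $B(\mathbf r_1,R_1+\delta)$ and $B(\mathbf r_2,R_2+\delta)$ first meet, cf.\ Eq.~\eqref{eq_notripletI2} — then at $\delta=\delta_{\max}$ these two dilated disks are externally tangent, meeting in a single point on the segment $[\mathbf r_1,\mathbf r_2]$, so $\mathcal I(\delta_{\max})$ is contained in that point; by the singleton property $\mathbf p$ equals it and lies in $\partial\mathcal T$, contradicting $\mathbf p\in\mathrm{int}\,\mathcal T$. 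For the ``no such circle'' claim in Case 2: if some $\mathbf q\in\mathrm{int}\,\mathcal T$ satisfied $|\mathbf q-\mathbf r_i|=R_i+\rho$ for all $i$, then the three unit vectors $(\mathbf q-\mathbf r_i)/|\mathbf q-\mathbf r_i|$ have $0$ in their convex hull (otherwise they all lie strictly on one side of a line through the origin, which places $\mathbf q$ outside $\mathrm{conv}\{\mathbf r_i\}=\mathcal T$), hence $0\in\partial g(\mathbf q)$, forcing $\mathbf q=\mathbf p\in\partial\mathcal T$ — a contradiction. I expect the main obstacle to be exactly these borderline bookkeeping points — excluding $\mathbf p=\mathbf r_i$, identifying the active set when $\mathbf p$ sits on an edge, and upgrading ``$\ge$'' to ``$>$'' for the pairwise thresholds — rather than the convex‑analytic backbone, which is routine once the reformulation $\delta_{\max}=\min g$ is in place; the non‑collinearity hypothesis is used precisely at these points.
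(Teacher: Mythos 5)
Your proposal is correct, but it follows a genuinely different route from the paper. The paper proves Prop.~\ref{prop:dichotomy} by elementary planar geometry: it tracks the lens-shaped pairwise intersections as $\delta$ grows and, in Lemmas~\ref{lem:aux1}--\ref{lem:aux3}, locates the first triple-contact point by minimizing the distance from the lens boundary to the third center and excluding, by contradiction, that this point is a non-corner point of the lens or lies outside $\mathcal T$. You instead encode everything in the single convex, coercive function $g(\mathbf x)=\max_i\bigl(|\mathbf x-\mathbf r_i|-R_i\bigr)$, identify $\delta_{\max}=\min g$ via Lemma~\ref{lem_deltamaxUNIQUE}, and read the dichotomy off the optimality condition $0\in\partial g(\mathbf p)$: uniqueness of the minimizer gives the singleton statement in both cases, the active-set/convex-combination argument shows that an interior minimizer forces tangency to all three circles (Case 1), while a boundary minimizer forces $\delta_{\max}=\max(\delta_1,\delta_2,\delta_3)$ through the edge computation with Eq.~\eqref{eq_deltai} (Case 2), and the separation argument at the end shows the two cases are mutually exclusive. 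The steps you leave compressed (attainment of $\min g$ by coercivity, excluding $\mathbf p=\mathbf r_i$, the active-set bookkeeping on an edge, and the strictness $\delta_{\max}>\max(\delta_1,\delta_2,\delta_3)$ via the tangent pair of dilated disks) all go through, and your appeal to $\delta_{\max}\geq\max(\delta_1,\delta_2,\delta_3)$ is legitimate since the paper establishes exactly this, via Eq.~\eqref{eq_notripletI2}, just before the proposition. What each approach buys: yours packages uniqueness and tangency into standard convex-analysis facts (midpoint strict inequality, subdifferential of a pointwise maximum, separating hyperplane), avoids the borderline case analysis with lens corners, and would extend with little change to higher dimensions, more balls, or a half-space replacing one ball; the paper's lens-based argument stays entirely elementary and its intermediate lemmas are reused later in the monotonicity proof of Prop.~\ref{theorem_da}, which your formulation would have to re-derive separately.
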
 

The location of the triple intersection for the maximal depletion radius $\delta_{\max}$ is drawn in the top panel of Fig.~\ref{fig:circle3} for a range of selected configurations of three disks.
In case 1, the shell radius $\delta^*$ is the radius of one of the Apollonius circles \cite[Chapter VI]{johnson60}.
Note that our circle from case~1 is uniquely defined 
while the Apollonius problem may have up to eight solution circles, roughly because tangency merely requires  $|\mathbf p - \mathbf r_i| =  \pm (R_i \pm  \delta^*)$,
while the expression on the right-hand side must be positive (a minus sign of $R_i$ or $\delta^*$ 
corresponds to internal tangency which is not relevant here).
Furthermore, in case 2, there might very well be an Apollonius circle whose center satisfies  the appropriate conditions $|\mathbf p - \mathbf r_i| = R_i + \delta^*$ for all $i$, but that center is not in the interior of the triangle. 
 The relation between our depletion problem and the Apollonius circles is also illustrated in the bottom panel of Fig.~\ref{fig:circle3}.

\begin{lemma}\label{lem:aux1}
	If $\delta_{\max} = \delta_3 = \max(\delta_1,\delta_2,\delta_3)$, then the triple intersection $\cap_{i=1}^3 B(\mathbf r_i, R_i+\delta_{\max})$ at maximal shell radius $\delta = \delta_{\max}$ is equal to the pairwise intersection $\cap_{i=1}^2 B(\mathbf r_i, R_i+\delta_{\max})$ and consists of a single point $\mathbf p$ on the triangle side connecting $\mathbf r_1$ and $\mathbf r_2$. The point $\mathbf p$ satisfies 
	\begin{equation} \label{eq:lemaux1}
		|\mathbf p - \mathbf r_1|= R_1+\delta_{\max}, \quad  |\mathbf p - \mathbf r_2|= R_2+\delta_{\max},\quad |\mathbf p - \mathbf r_3|\leq R_3+\delta_{\max}
	\end{equation} 
\end{lemma}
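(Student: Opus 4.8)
The plan is to analyze the geometry at the maximal shell radius $\delta_{\max}=\delta_3$ under the ordering $\delta_1\le \delta_2\le \delta_3$. First I would record the two basic facts already available: at $\delta=\delta_3$ the pair $B(\mathbf r_1,R_1+\delta_3)\cap B(\mathbf r_2,R_2+\delta_3)$ is non-empty (since $\delta_3\ge\delta_3=\delta_3$, using~\eqref{eq_notripletI2} with the pair opposite corner $3$), and the triple intersection $\mathcal I(\delta_3)$ is non-empty by Lemma~\ref{lem_deltamaxUNIQUE} evaluated at $\delta=\delta_{\max}$. So $\mathcal I(\delta_3)\subseteq B(\mathbf r_1,R_1+\delta_3)\cap B(\mathbf r_2,R_2+\delta_3)$ is a non-empty closed convex set; I must show it is a single boundary point of each of the two disks $B(\mathbf r_1,\cdot)$, $B(\mathbf r_2,\cdot)$, that this point lies on the segment $[\mathbf r_1,\mathbf r_2]$, and that the third inequality $|\mathbf p-\mathbf r_3|\le R_3+\delta_3$ holds (the last being automatic since $\mathbf p\in B(\mathbf r_3,R_3+\delta_3)$).

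The key step is the following minimality/tightness argument. Suppose the triple intersection $\mathcal I(\delta_3)$ contained a point $\mathbf q$ in the \emph{interior} of $B(\mathbf r_1,R_1+\delta_3)\cap B(\mathbf r_2,R_2+\delta_3)$, i.e.\ with $|\mathbf q-\mathbf r_1|<R_1+\delta_3$ and $|\mathbf q-\mathbf r_2|<R_2+\delta_3$, and also $|\mathbf q-\mathbf r_3|\le R_3+\delta_3$. If moreover $|\mathbf q-\mathbf r_3|<R_3+\delta_3$, then all three distances are strictly less than $R_i+\delta_3$, so by continuity $\mathbf q$ would lie in $\mathcal I(\delta')$ for some $\delta'<\delta_3$, contradicting the definition of $\delta_{\max}=\delta_3$. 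Hence every point of $\mathcal I(\delta_3)$ that is interior to the first two disks must satisfy $|\mathbf q-\mathbf r_3|=R_3+\delta_3$ exactly; but $\mathcal I(\delta_3)$ is convex and contained in the closed disk $B(\mathbf r_3,R_3+\delta_3)$, so it cannot have more than one point on that bounding circle unless it is a non-degenerate segment of the circle — impossible for a convex subset of a disk — so in fact $\mathcal I(\delta_3)$ can contain at most one point not lying on $\partial B(\mathbf r_1,R_1+\delta_3)\cup\partial B(\mathbf r_2,R_2+\delta_3)$. Combining, $\mathcal I(\delta_3)$ is forced to be a single point $\mathbf p$, and that point satisfies $|\mathbf p-\mathbf r_1|=R_1+\delta_3$ and $|\mathbf p-\mathbf r_2|=R_2+\delta_3$: indeed if, say, $|\mathbf p-\mathbf r_1|<R_1+\delta_3$ we could perturb $\mathbf p$ slightly toward $\mathbf r_1$ along the chord of $B(\mathbf r_2,\cdot)\cap B(\mathbf r_3,\cdot)$ and stay in all three disks, producing a second point. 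I would phrase this last part cleanly using the fact that an extreme/isolated point of the intersection of three closed disks must lie on at least two of the three bounding circles.

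Next I would identify the location of $\mathbf p$. Since $|\mathbf p-\mathbf r_1|=R_1+\delta_3$ and $|\mathbf p-\mathbf r_2|=R_2+\delta_3$ and $|\mathbf r_1-\mathbf r_2|=\ell_3$ with $\delta_3=\tfrac12(\ell_3-R_1-R_2)$, we get $|\mathbf p-\mathbf r_1|+|\mathbf p-\mathbf r_2| = R_1+R_2+2\delta_3 = \ell_3 = |\mathbf r_1-\mathbf r_2|$, so the triangle inequality is saturated and $\mathbf p$ lies on the closed segment $[\mathbf r_1,\mathbf r_2]$, which is the side of $\mathcal T$ opposite corner $3$. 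Finally $|\mathbf p-\mathbf r_3|\le R_3+\delta_{\max}$ holds because $\mathbf p\in\mathcal I(\delta_3)\subseteq B(\mathbf r_3,R_3+\delta_3)$. This gives~\eqref{eq:lemaux1} and the claim that $\mathcal I(\delta_{\max})$ equals the pairwise intersection $B(\mathbf r_1,R_1+\delta_{\max})\cap B(\mathbf r_2,R_2+\delta_{\max})$, which we have just shown is the singleton $\{\mathbf p\}$.

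The main obstacle I anticipate is the rigorous "at most one point" and "two of the three circles" bookkeeping: the convexity-plus-disk argument needs to be stated carefully so as to exclude the degenerate possibility that $\mathcal I(\delta_3)$ is a segment (which is ruled out because a chord of $B(\mathbf r_3,R_3+\delta_3)$ lying on $\partial B(\mathbf r_3,R_3+\delta_3)$ degenerates to a point, and a chord interior to that disk would admit the $\delta'<\delta_3$ perturbation), together with the perturbation arguments showing the two equalities $|\mathbf p-\mathbf r_i|=R_i+\delta_3$, $i=1,2$. One must also be mildly careful that the perturbation directions stay inside the other disks; using that $\delta_3=\delta_{\max}$ is exactly the threshold makes all the strict inequalities line up. Everything else — the triangle-inequality computation placing $\mathbf p$ on the side $[\mathbf r_1,\mathbf r_2]$, and the third inequality — is immediate.
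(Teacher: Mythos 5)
There is a genuine gap, and it stems from not using—until the very last step—the one fact that makes the lemma immediate. By definition $\delta_3=\tfrac12(\ell_3-R_1-R_2)$, so at $\delta=\delta_3$ the radii of the two dilated disks around $\mathbf r_1$ and $\mathbf r_2$ sum exactly to $|\mathbf r_1-\mathbf r_2|=\ell_3$: the disks $B(\mathbf r_1,R_1+\delta_3)$ and $B(\mathbf r_2,R_2+\delta_3)$ are externally tangent, so their pairwise intersection is \emph{already} the single point $\mathbf p$ on the segment $[\mathbf r_1,\mathbf r_2]$ with $|\mathbf p-\mathbf r_1|=R_1+\delta_3$ and $|\mathbf p-\mathbf r_2|=R_2+\delta_3$. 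This is the paper's whole proof: the triple intersection is non-empty at $\delta=\delta_{\max}$ by Lemma~\ref{lem_deltamaxUNIQUE} and is contained in that singleton, hence equals $\{\mathbf p\}$, and the third inequality in \eqref{eq:lemaux1} holds because $\mathbf p$ lies in $B(\mathbf r_3,R_3+\delta_{\max})$. You invoke the identity $R_1+R_2+2\delta_3=\ell_3$ only at the end, to place $\mathbf p$ on the segment; invoked at the start it shows your ``key step'' treats a vacuous case (no point can be interior to both of the first two dilated disks) and renders the rest of your construction unnecessary.

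The machinery you substitute for this observation does not hold up. First, the convexity claim is false: a convex subset of a closed disk can meet the bounding circle in more than one point (both endpoints of a chord, the vertices of an inscribed polygon, or the convex hull of an arc, which contains the whole arc), so ``at most one point not lying on $\partial B(\mathbf r_1,R_1+\delta_3)\cup\partial B(\mathbf r_2,R_2+\delta_3)$'' does not follow; and even if it did, it says nothing about points of $\mathcal I(\delta_3)$ lying on those two circles, so the step ``Combining, $\mathcal I(\delta_3)$ is forced to be a single point'' is not justified. All that maximality gives pointwise is that every point of $\mathcal I(\delta_{\max})$ lies on at least one of the three circles. Second, the perturbation producing ``a second point'' both presupposes the unproven singleton-ness and breaks down in the tie case $\delta_1=\delta_3$ (or $\delta_2=\delta_3$), which the ordering $\delta_1\leq\delta_2\leq\delta_3$ allows: there $B(\mathbf r_2,R_2+\delta_3)\cap B(\mathbf r_3,R_3+\delta_3)$ is itself a single point and no chord exists to move along. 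So as written neither the equalities $|\mathbf p-\mathbf r_i|=R_i+\delta_{\max}$, $i=1,2$, nor the identification of the triple with the pairwise intersection is established; the tangency observation repairs all of this in one stroke.
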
 

\begin{proof} 
	At $\delta = \delta_3$ the intersection $\cap_{i=1}^2 B(\mathbf r_i, R_i + \delta_3)$ consists of a single point $\mathbf p$ on the triangle connecting $\mathbf r_1$ and $\mathbf r_2$ with distance $R_1+ \delta_3$ and $R_2+ \delta_3$ to $\mathbf r_1$ and $\mathbf r_2$, respectively. If $\delta_3=\delta_{\max}$, then the triple intersection $\cap_{i=1}^3 B(\mathbf r_i, R_i+\delta_{\max})$ is non-empty and contained in the double intersection $\cap_{i=1}^2 B(\mathbf r_i, R_i+\delta_{\max}) = \{\mathbf p\}$, therefore the triple intersection is the singleton $\{\mathbf p\}$ and the point $\mathbf p$ must satisfy~\eqref{eq:lemaux1}.
\end{proof}

\begin{lemma} \label{lem:aux2} 
	If $\delta_{\max} >\delta_3= \max (\delta_1,\delta_2,\delta_3)$, then the triple intersection $\cap_{i=1}^3 B(\mathbf r_i, R_i+\delta_{\max})$ at maximal shell radius $\delta = \delta_{\max}$ consists of a single point $\mathbf p$ that lies in the interior of the triangle $\mathcal T$ and satisfies 
	\begin{equation} 
		|\mathbf p - \mathbf r_i| = R_i +\delta_{\max},\quad i = 1,2,3. 
	\end{equation} 
\end{lemma}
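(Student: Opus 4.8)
The plan is to establish the three assertions of Lemma~\ref{lem:aux2} in the order stated: that $\mathcal I(\delta_{\max}):=\bigcap_{i=1}^3 B(\mathbf r_i,R_i+\delta_{\max})$ reduces to a single point $\mathbf p$, that every distance constraint is active at $\mathbf p$, and that $\mathbf p$ lies in the interior of $\mathcal T$. Throughout I would use two facts from Lemma~\ref{lem_deltamaxUNIQUE}: $\mathcal I(\delta_{\max})\neq\varnothing$, and $\mathcal I(\delta)=\varnothing$ for every $\delta<\delta_{\max}$. For the first assertion, note that $\mathcal I(\delta_{\max})$ is convex, being an intersection of closed disks. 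If it contained two distinct points $\mathbf p\neq\mathbf q$, their midpoint $\mathbf m$ would satisfy, by the parallelogram identity, $|\mathbf m-\mathbf r_i|^2\leq(R_i+\delta_{\max})^2-\tfrac14|\mathbf p-\mathbf q|^2<(R_i+\delta_{\max})^2$ for $i=1,2,3$; hence $\mathrm{dist}(\mathbf m,B(\mathbf r_i,R_i))<\delta_{\max}$ for each $i$, so $\mathbf m\in\mathcal I(\delta')$ with $\delta':=\max_i\mathrm{dist}(\mathbf m,B(\mathbf r_i,R_i))<\delta_{\max}$, contradicting $\mathcal I(\delta')=\varnothing$. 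Thus $\mathcal I(\delta_{\max})=\{\mathbf p\}$ and, by definition of the dilated balls, $|\mathbf p-\mathbf r_i|\leq R_i+\delta_{\max}$ for all $i$.

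Next I would show that in fact $|\mathbf p-\mathbf r_i|=R_i+\delta_{\max}$ for every $i$. Suppose instead that some constraint is slack, say $|\mathbf p-\mathbf r_m|<R_m+\delta_{\max}$, and put $\{a,b\}=\{1,2,3\}\setminus\{m\}$. Then $\mathbf p$ is interior to $B(\mathbf r_m,R_m+\delta_{\max})$, and I claim the convex set $D:=B(\mathbf r_a,R_a+\delta_{\max})\cap B(\mathbf r_b,R_b+\delta_{\max})$ equals $\{\mathbf p\}$: were there a point $\mathbf q\in D$ with $\mathbf q\neq\mathbf p$, a short initial piece of the segment $[\mathbf p,\mathbf q]\subset D$ would still lie inside $B(\mathbf r_m,R_m+\delta_{\max})$ and hence inside $\mathcal I(\delta_{\max})$, contradicting $\mathcal I(\delta_{\max})=\{\mathbf p\}$. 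Two closed disks meeting in exactly one point are tangent; internal tangency would force $\ell_m=|R_a-R_b|$, incompatible with the non-overlap condition $\ell_m\geq R_a+R_b$, so the disks are externally tangent: $\ell_m=R_a+R_b+2\delta_{\max}$, i.e.\ $\delta_{\max}=\tfrac12(\ell_m-R_a-R_b)=\delta_m\leq\delta_3$, contradicting the hypothesis $\delta_{\max}>\delta_3$. Hence all three constraints are active at $\mathbf p$.

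For the location of $\mathbf p$, I would reformulate the problem through the convex function $f(\mathbf x):=\max_{i=1,2,3}\bigl(|\mathbf x-\mathbf r_i|-R_i\bigr)$. Since $\mathrm{dist}(\mathbf x,B(\mathbf r_i,R_i))=\max(0,|\mathbf x-\mathbf r_i|-R_i)$, one has $\mathcal I(\delta)=\{\,\mathbf x:f(\mathbf x)\leq\delta\,\}$ for every $\delta\geq0$, so $f$ attains its minimum, equal to $\delta_{\max}$, precisely at $\mathbf p$. If $\mathbf p\notin\mathcal T$, let $\mathbf p'$ be its metric projection onto the closed convex set $\mathcal T$; the variational inequality $\langle\mathbf p-\mathbf p',\mathbf y-\mathbf p'\rangle\leq0$ for all $\mathbf y\in\mathcal T$ gives $|\mathbf r_i-\mathbf p|^2\geq|\mathbf r_i-\mathbf p'|^2+|\mathbf p-\mathbf p'|^2>|\mathbf r_i-\mathbf p'|^2$ for each $i$, whence $f(\mathbf p')<f(\mathbf p)=\delta_{\max}$, contradicting minimality; so $\mathbf p\in\mathcal T$. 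It remains to rule out $\mathbf p\in\partial\mathcal T$: $\mathbf p$ cannot be a corner $\mathbf r_k$, since that would give $0=|\mathbf p-\mathbf r_k|=R_k+\delta_{\max}>0$; and if $\mathbf p$ lay on the open edge joining $\mathbf r_i$ and $\mathbf r_j$, so that $|\mathbf p-\mathbf r_i|+|\mathbf p-\mathbf r_j|=\ell_k$, then the active-constraint relations from the previous step would give $\ell_k=R_i+R_j+2\delta_{\max}$, i.e.\ $\delta_{\max}=\delta_k\leq\delta_3$, again a contradiction. Therefore $\mathbf p\in\mathrm{int}(\mathcal T)$.

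The step I expect to be the real obstacle is the last one—pinning $\mathbf p$ down to the \emph{open} triangle rather than merely to $\mathcal T$—because this is exactly where the strict inequality $\delta_{\max}>\delta_3$ must be exploited and where a genuine convex-geometry input (that the metric projection onto a convex set strictly decreases the distance to every point of that set) enters; the reduction to a single point and the activation of all three constraints are comparatively routine once one uses convexity of intersections and the classification of how two closed disks can meet.
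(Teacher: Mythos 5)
Your proof is correct, but it follows a genuinely different route from the paper. The paper singles out the pair $(1,2)$: for $\delta>\delta_3$ it studies the lens $\mathcal L_\text{pc}(\delta)=B(\mathbf r_1,R_1+\delta)\cap B(\mathbf r_2,R_2+\delta)$, identifies $\mathbf p$ as the boundary point of the lens closest to $\mathbf r_3$, and then argues by a case analysis that $\mathbf p$ must be a corner of the lens (otherwise $\delta_{\max}=\delta_1$) and must lie inside $\mathcal T$ (otherwise $\delta_{\max}=\delta_2$, or the lens would swallow part of the segment $[\mathbf r_1,\mathbf r_3]$ and contradict the minimality defining $\mathbf p$). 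You instead treat the three disks symmetrically: single-pointness of $\mathcal I(\delta_{\max})$ via the parallelogram identity (strict convexity of disks) combined with the threshold property of Lemma~\ref{lem_deltamaxUNIQUE}; activity of all three constraints by showing a slack constraint forces the other two dilated disks to meet in a single point, hence to be externally tangent, giving $\delta_{\max}=\delta_m\leq\delta_3$; and location in $\mathrm{int}\,\mathcal T$ via the minimax function $f(\mathbf x)=\max_i(|\mathbf x-\mathbf r_i|-R_i)$, whose unique minimizer is $\mathbf p$ with value $\delta_{\max}$, together with the strict-decrease property of the metric projection onto $\mathcal T$ and the side-length identity to exclude edges and corners. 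Your version buys symmetry (no privileged pair), dispenses with the lens-corner case analysis that is rather terse in the paper, and works verbatim in any dimension; the paper's lens picture, on the other hand, is reused in Lemma~\ref{lem:aux3} and in the discussion of the dichotomy, so it earns its keep there. Two cosmetic points you may want to polish: the phrase ``two closed disks meeting in exactly one point are tangent'' is slightly loose, since internal tangency of the boundary circles actually yields the whole smaller disk as intersection rather than a point --- but as you exclude internal tangency via $\ell_m\geq R_a+R_b>|R_a-R_b|$ anyway, the conclusion of external tangency stands; and in the midpoint argument the value $\delta'$ could in principle be $0$, which is handled by replacing it with any $\delta''\in(\delta',\delta_{\max})$ before invoking Lemma~\ref{lem_deltamaxUNIQUE}. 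Neither affects the validity of the proof.
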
 

Therefore the circle $\partial B(\mathbf p,\delta_{\max})$ is tangent to each of the circles $\partial B(\mathbf r_i, R_i)$ (i.e.,  it intersects each of those circles in exactly one point): it is one of the circles that solves the Apollonius problem. 

\begin{proof}
	For $\delta>\delta_3$,  the pairwise intersection $\mathcal{L}_\text{pc}(\delta) :=  B(\mathbf r_1, R_1+\delta) \cap  B(\mathbf r_2, R_2 + \delta)$ of two disks
	is a lens-shaped non-empty, compact, convex set. 
	The intersection of the lens with $B(\mathbf r_3, R_3+\delta)$ is equal to the triple intersection of the disks $B(\mathbf r_i, R_i +\delta)$, $i=1,2,3$. It is empty for $\delta< \delta_{\max}$ and non-empty for $\delta \geq \delta_{\max}$. Therefore, for $\delta < \delta_{\max}$, the lens $\mathcal L_\text{pc}(\delta)$ has distance strictly larger than $R_3 + \delta$ to $\mathbf r_3$; for $\delta= \delta_{\max}$, the lens has distance equal to $R_3+\delta_{\max}$ to $\mathbf r_3$. Let $\mathbf p$ be the uniquely defined point on the boundary of the lens that minimizes the distance towards $\mathbf r_3$, so that $|\mathbf p - \mathbf r_3| = R_3+ \delta_{\max}$. Then 
	\begin{equation}
		\bigcap_{i=1}^3 B(\mathbf r_i, R_i +\delta_{\max}) = \mathcal L_\text{pc}(\delta_{\max}) \cap B(\mathbf r_3, R_3 +\delta_{\max}) = \{\mathbf p\}.
	\end{equation} 
	The point $\mathbf p$ lies on the boundary of the lens, therefore $|\mathbf p - \mathbf r_i|\leq R_i +\delta_{\max}$ for $i=1,2$ with equality in at least one of the indices $1,2$. Suppose by contradiction that equality holds for exactly one index, say $|\mathbf p - \mathbf r_2| = R_2+ \delta_{\max}$ but $|\mathbf p - \mathbf r_1| < R_1 + \delta_{\max}$. This means that $\mathbf p$ is not a corner of the lens. 
	Then $\mathbf p$ must be on the intersection of $\partial B(\mathbf r_2, R_2+\delta_{\max})$ with the segment $[\mathbf r_2, \mathbf r_3]$. This implies $\delta_{\max} = \delta_1$, contradicting the assumption $\delta_{\max} > \max(\delta_1,\delta_2,\delta_3)$. Therefore $\mathbf p$ is a corner of the lens and $|\mathbf p - \mathbf r_i| = R_i+\delta_{\max}$ for all $i\in \{1,2,3\}$. 
	
	Finally, suppose by contradiction that the lens corner $\mathbf p$ is not in the interior of the triangle. If it is on the boundary, then it must be on the triangle side connecting $\mathbf r_1$ and $\mathbf r_3$, hence $\delta_{\max} = \delta_2$ in contradiction with $\delta_{\max}>\delta_3\geq \delta_2$. If $\mathbf p$ is outside $\mathcal T$, then the lens must contain a part of the segment $[\mathbf r_1, \mathbf r_3]$ in its interior. The intersection of $\partial B(\mathbf r_1, R_1+\delta_{\max})$ with the segment consists of a single point $\mathbf q$ that is closer to $\mathbf r_3$ than $\mathbf p$, contradicting the definition of $\mathbf p$. 
\end{proof} 

The following is a converse to the previous lemma. 

\begin{lemma} \label{lem:aux3}
	Suppose that there exist a point $\mathbf p$ in the interior of the triangle $\mathcal T$ and a radius $\delta^*>0$ such that $|\mathbf p - \mathbf r_i| = R_i +\delta^*$ for $i=1,2,3$. Then $\delta_{\max}$ is equal to $\delta^*$ and the triple intersection of the disks $B(\mathbf r_i,R_i+\delta_{\max})$ consists of the unique point $\mathbf p$. 
\end{lemma}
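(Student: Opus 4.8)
The plan is to prove directly that $\bigcap_{i=1}^3 B(\mathbf r_i, R_i+\delta^*)$ collapses to the single point $\mathbf p$; once this is in hand, the identity $\delta_{\max}=\delta^*$ and the description of the triple intersection follow immediately from Lemma~\ref{lem_deltamaxUNIQUE}.

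First I would localize the three disk constraints at $\mathbf p$. Writing an arbitrary point as $\mathbf q = \mathbf p+\mathbf v$ and using the hypothesis $R_i+\delta^*=|\mathbf p-\mathbf r_i|$, the membership condition $|\mathbf q-\mathbf r_i|\le R_i+\delta^*$ becomes, after expanding the square and cancelling $|\mathbf p-\mathbf r_i|^2$, the equivalent inequality $\mathbf v\cdot(\mathbf r_i-\mathbf p)\ge \tfrac12|\mathbf v|^2$ for each $i=1,2,3$. Taking $\mathbf v=\mathbf 0$ shows $\mathbf p$ lies in all three dilated disks, so the triple intersection at $\delta^*$ is non-empty.

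The crux is to show that these three inequalities force $\mathbf v=\mathbf 0$. Here I would use that $\mathbf p$ lies in the interior of the (non-degenerate) triangle $\mathcal T$: its barycentric coordinates $\lambda_1,\lambda_2,\lambda_3$ are then strictly positive, satisfy $\sum_i\lambda_i=1$, and obey $\sum_i\lambda_i(\mathbf r_i-\mathbf p)=\mathbf 0$. Forming the convex combination of the three inequalities with weights $\lambda_i$ yields
\[
 0=\mathbf v\cdot\Big(\sum_{i=1}^3\lambda_i(\mathbf r_i-\mathbf p)\Big)=\sum_{i=1}^3\lambda_i\,\mathbf v\cdot(\mathbf r_i-\mathbf p)\ \ge\ \tfrac12|\mathbf v|^2,
\]
which is possible only if $\mathbf v=\mathbf 0$. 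Hence $\bigcap_{i=1}^3 B(\mathbf r_i,R_i+\delta^*)=\{\mathbf p\}$. I expect this singleton step to be the one deserving the most care, but the barycentric trick makes it short; it also works verbatim in any dimension, with ``$\mathbf p$ in the interior of $\mathcal T$'' replaced by ``$\mathbf 0$ in the relative interior of $\mathrm{conv}\{\mathbf r_i-\mathbf p:i=1,2,3\}$''.

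Finally I would read off the conclusion from Lemma~\ref{lem_deltamaxUNIQUE}. Non-emptiness of the triple intersection at $\delta^*$ already gives $\delta_{\max}\le\delta^*$ via~\eqref{eq:deltamax-def}. Conversely, for $\delta<\delta^*$ the monotonicity of $\delta\mapsto\bigcap_i B(\mathbf r_i,R_i+\delta)$ shows this intersection is contained in $\{\mathbf p\}$, and since $|\mathbf p-\mathbf r_i|=R_i+\delta^*>R_i+\delta$ the point $\mathbf p$ is not in it, so it is empty; hence $\delta_{\max}\ge\delta^*$. Combining the two inequalities gives $\delta_{\max}=\delta^*$, and the triple intersection at $\delta=\delta_{\max}$ is then exactly $\{\mathbf p\}$. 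A more pedestrian alternative to the singleton step --- describing the lens $B(\mathbf r_1,R_1+\delta^*)\cap B(\mathbf r_2,R_2+\delta^*)$, whose two corners are $\mathbf p$ and its reflection across the line through $\mathbf r_1$ and $\mathbf r_2$, and checking that the reflected corner lies strictly outside $B(\mathbf r_3,R_3+\delta^*)$ because $\mathbf r_3$ is on the same side of that line as $\mathbf p$ --- is available, but I would favour the barycentric argument for its brevity and robustness.
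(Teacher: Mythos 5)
Your proof is correct, and it takes a genuinely different route from the paper. The paper argues geometrically: at $\delta=\delta^*$ the three pairwise intersections are lens-shaped regions centered on the triangle sides that meet only at their common tip $\mathbf p$, and for $\delta<\delta^*$ the lenses are smaller and do not meet, whence $\delta_{\max}=\delta^*$. You instead localize the three membership conditions at $\mathbf p$: writing $\mathbf q=\mathbf p+\mathbf v$ and using $|\mathbf p-\mathbf r_i|=R_i+\delta^*$, each condition $|\mathbf q-\mathbf r_i|\leq R_i+\delta^*$ becomes $\mathbf v\cdot(\mathbf r_i-\mathbf p)\geq\tfrac12|\mathbf v|^2$, and the strictly positive barycentric weights of $\mathbf p$ (which satisfy $\sum_i\lambda_i(\mathbf r_i-\mathbf p)=\mathbf 0$) turn the convex combination of these into $0\geq\tfrac12|\mathbf v|^2$, forcing $\mathbf v=\mathbf 0$; the identification $\delta_{\max}=\delta^*$ then follows cleanly from the equivalence~\eqref{eq:deltamax-def} in Lemma~\ref{lem_deltamaxUNIQUE}, exactly as you argue in both directions. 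What your approach buys is rigor and generality: it replaces the paper's somewhat informal appeal to the shape and shrinking of the lenses by a two-line algebraic computation, it works verbatim in any dimension, and it in fact only needs $\mathbf p$ in the closed triangle (nonnegative weights suffice), so it proves a slightly stronger statement. What the paper's argument buys is the geometric picture (tips of lenses, Apollonius circle) that is reused in the surrounding discussion and in Proposition~\ref{prop:dichotomy}.
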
 

\begin{proof}
	The point $\mathbf p$ is in the triple intersection $\cap_{i=1}^3 B(\mathbf r_i, R_i + \delta^*)$, therefore the latter is non-empty and $\delta_{\max} \leq \delta^*$. At $\delta = \delta^*$, the double intersections $\mathcal L_{ij}(\delta) = \cap_{s\in \{i,j\}} B(\mathbf r_s, R_s+\delta)$ are lens-shaped regions, centered on the triangle sides, that meet at their tips in $\mathbf p$, so the triple intersection consists of exactly the point $\mathbf p$. At $\delta < \delta^*$ the lenses are smaller and do not meet at all, hence the triple intersection is empty and $\delta_{\max}$ is actually equal to $\delta^*$. 
\end{proof} 

Proposition~\ref{prop:dichotomy} follows from Lemmas~\ref{lem:aux1}--\ref{lem:aux3}, the details are left to the reader. 

\subsection{Monotonicity. Proof of Proposition~\ref{theorem_da}}

Let us investigate how the maximal shell radius $\delta_{\max}$ changes when we vary the angle $\alpha_1$ at fixed side lengths $\ell_2$ and $\ell_3$ (without loss of generality). We distinguish three cases: 
\begin{enumerate} 
	\item $\delta_{\max} = \max(\delta_2,\delta_3)$, i.e., we are in case 2 of Prop.~\ref{prop:dichotomy} with triple intersection point on one of the triangle sides adjacent to $\mathbf r_1$. 
	\item  $\delta_{\max}= \delta^*>\max (\delta_1,\delta_2,\delta_3)$ with triple intersection point in the interior of the triangle (case 1 of Prop.~\ref{prop:dichotomy}).
	\item $\delta_{\max} = \delta_1$, i.e., triple intersection point on the triangle edge opposite $\mathbf r_1$ (back to case 2 of Prop.~\ref{prop:dichotomy}). 
\end{enumerate} 
We shall see in Steps 4--5 below that, as $\alpha_1$ increases, the cases occur in the order listed above. That is, for small $\alpha_1$ the point of triple intersection may be initially on one of the triangle sides adjacent to $\mathbf r_1$ and stay there for small $\alpha_1$. With increasing $\alpha_1$ the point moves to the interior of the triangle until it hits the side opposite $\mathbf r_1$; once there, it does not leave. 

The proof of Prop.~\ref{theorem_da} is in several steps. First we establish monotonicity within the three cases (steps 1--3), then we address the order of occurrence of the three cases (steps 4--5), and finally  we conclude using Prop.~\ref{prop:dichotomy}. 

\textbf{1.} Monotonicity within case 1 is trivial as $\delta_2$ and $\delta_3$ do not depend on $\alpha_1$. 

\textbf{2.} Monotonicity within case 3 is easily established as well. The threshold
$\delta_1 = \frac12 (\ell_1 - R_2 - R_3)$ depends on $\alpha_1$ via $\ell_1$ given by
\begin{equation}
	\ell_1^2 = \ell_2^2 + \ell_3^2 - 2 \ell_2 \ell_3 \cos \alpha_1.
\end{equation} 	
Hence $\ell_1$ and $\delta_1= \delta_1(\alpha_1)$ are increasing functions of $\alpha_1\in (0,\pi)$. 

\textbf{3.} For monotonicity within case 2 of the list above, let $\mathbf p$ be the triple intersection point in the interior of the triangle. The ray from $\mathbf r_1$ to $\mathbf p$ splits the angle $\alpha_1 \in (0,\pi)$ into two angles $\alpha_1^{(2)}>0$ and $\alpha_1^{(3)}>0$ (see Fig.~\ref{fig:circle3_arb}), 
\begin{equation} \label{eq:alpha1}
	\alpha_1 = \alpha_1^{(2)}+ \alpha_1^{(3)}.
\end{equation}
By the cosine rule in the triangle with vertices $\mathbf r_1, \mathbf r_2, \mathbf p$, the angle $\alpha_1^{(2)}$  satisfies
\begin{equation} \label{eq:alpha12}
	\cos(\alpha_1^{(2)}) = \frac{\ell_3^2 + (R_2+ \delta^*)^2 - (R_1+\delta^*)^2}{2 \ell_3 (R_1+\delta^*)} =: f_2(\delta^*)
\end{equation} 
Similarly, 
\begin{equation} \label{eq:alpha13}
	\cos(\alpha_1^{(3)}) = \frac{\ell_2^2 + (R_3+ \delta^*)^2 - (R_1+\delta^*)^2}{2 \ell_2 (R_1+\delta^*)}
	 =: f_3(\delta^*).
\end{equation}
The derivative of the right-hand side of~\eqref{eq:alpha12} with respect to $\delta^*$ (at fixed $R_1,R_2$ and $\ell_3$) can be computed explicitly, it is equal to 
\begin{equation}
	f'_2(\delta^*) = \frac{(R_1- R_2)^2 - \ell_3^2}{2 \ell_3(R_1+\delta^*)}< 0.
\end{equation} 
It is strictly negative because $\ell_3 \geq R_1+ R_2 > |R_1-R_2|$. Furthermore  $f_2(\delta_3) = 1$ and 
\begin{equation}
	\lim_{\delta^*\to \infty} f_2(\delta^*) = \frac{R_2- R_1}{\ell_3}\in (-1,1).
\end{equation}
It follows that the solution to Eq.~\eqref{eq:alpha12}, given by 
\begin{equation} \label{eq:arcos1}
	\alpha_1^{(2)}(\delta^*):= \arccos f_2(\delta^*)\in [0,\pi) \quad (\delta^*\geq \delta_3) 
\end{equation} 
is a strictly increasing function of $\delta^*$. The same holds true for
\begin{equation} \label{eq:arcos2}
	\alpha_1^{(3)}(\delta^*):= \arccos f_3(\delta^*)\in [0,\pi) \quad (\delta^*\geq \delta_2).
\end{equation} 
Therefore Eqs.~\eqref{eq:alpha1}--\eqref{eq:alpha13} determine the angle $\alpha_1$ uniquely as a strictly increasing function of $\delta^*$. Inverting the relationship, we see that $\delta^* = \delta^*(\alpha_1)$ is an increasing function of $\alpha_1$. Monotonicity within case 2 of the list above follows.

\textbf{4.} Now we turn to the order in which the cases occur. We may assume without loss of generality $\delta_3 \geq \delta_2$. Let $\mathbf q$ be the point of contact of $B(\mathbf r_1, R_1)$ and $B(\mathbf r_2, R_2)$. Then case 1 occurs if and only if $|\mathbf q - \mathbf r_3|\leq R_3 + \delta_3$. Because of
\begin{equation}
	|\mathbf q - \mathbf r_3|^2 = (R_1+ \delta_3)^2 + \ell_2^2 - 2 (R_1+\delta_3) \ell_2 \cos \alpha_1,
\end{equation} 
the distance $|\mathbf q - \mathbf r_3|$ increases with $\alpha_1$. Therefore, once $|\mathbf q - \mathbf r_3|$ crosses the threshold $R_3 +\delta_3$ it stays above the threshold.  That is, once we have left case 1 we do not come back to it.  

\textbf{5.} Similarly, once we leave case 2 while increasing $\alpha_1$, we do not come back to it. To see why, we revisit the computations from step 3. Let  $\alpha_1\in (0,\pi)$ be an angle that falls into case 2. Then, as proven earlier, $\delta_{\max}(\alpha_1)  = \delta^* (\alpha_1)$ is a solution to Eqs.~\eqref{eq:alpha1}--\eqref{eq:alpha13}. Moreover, because the center $\mathbf p$ of the circle $\partial B(\mathbf p,\delta^*)$ is inside the triangle $\mathcal T$, the inner angle $\beta_1$ at $\mathbf p$ in the triangle with vertices $\mathbf p$, $\mathbf r_2$, $\mathbf r_3$ is smaller than $\pi$. The angle $\beta_1 = \beta_1(\alpha_1)$ is an increasing function of $\alpha_1$. This is because the angles $\alpha_2^{(1)}$ and $\alpha_3^{(1)}$ are increasing functions of $\delta^*$, by arguments similar to step 3, and 
\begin{equation} 
	\beta_1 = 2\pi - \alpha_1 - \alpha_2^{(1)} - \alpha_3^{(1)}.
\end{equation} 
Conversely, let $\alpha_1$ be an angle with $\alpha_1 = \alpha_1^{(2)}(\delta^*) + \alpha_1^{(3)}(\delta^*)=:\alpha_1(\delta^*)$ for some $\delta^*>\max (\delta_2,\delta_3)$, with $\alpha_1^{(i)}(\delta^*)$ defined as in Eqs.~\eqref{eq:arcos1} and~\eqref{eq:arcos2}. The angles $\alpha_1^{(i)}(\delta^*)$ allow us to set a point $\mathbf q$ that satisfies $|\mathbf q - \mathbf r_i| = R_i + \delta^*$ for $i=1,2,3$. In general the point $\mathbf q$ might be outside the triangle or on its boundary, but if $\beta_1<\pi$, then the point $\mathbf q$ is in the interior of the triangle and we are in case 2. 

Fix an angle $\alpha_1$ that falls within case 3 and pick $\alpha'_1>\alpha_1$. We want to show that $\alpha'_1$ falls into case 3 as well. By step 4 we know that $\alpha'_1$ cannot fall into case 1, it remains to exclude case 2.  The map $\delta^*\mapsto \alpha_1(\delta^*)$ is a strictly increasing bijection from $(\max(\delta_2,\delta_3),\infty)$ onto some subinterval of $(0,\infty)$. Therefore, if the equation $\alpha_1(\delta^*)= \alpha_1$ has no solution, the same holds true for all $\alpha'_1>\alpha_1$ and $\alpha'_1$ cannot fall into case 2. If the equation $\alpha_1(\delta^*) = \alpha_1$ has a solution, then by the considerations above we must have $\beta_1(\alpha_1) \geq \pi$ hence $\beta_1(\alpha'_1)\geq \beta_1(\alpha_1)\geq \pi$ and case $2$ is excluded as well. Thus we have proven that if $\alpha_1$ is in case 3, then all larger angles $\alpha'_1\geq \alpha_1$ are in case 3 as well. 

\textbf{6.} The monotonicity of the map $\alpha_1\mapsto \delta_{\max}(\alpha_1,\ell_2,\ell_3)$ and, generally, $\alpha_i\mapsto \delta_{\max}(\alpha_i,\ell_j,\ell_k)$ follows from Steps 1--5 and the continuity of the map. \hfill $\qed$

\subsection{Descartes' circle theorem. Proof of Proposition~\ref{theorem_dR}}

Consider three circles with centers $\mathbf r_i$ and radii $R_i>0$ that are in contact, i.e., $\ell_1 = R_2+ R_3$ and similarly for the other triplets. In this situation the pairwise thresholds $\delta_i$ vanish and in view of $\delta_{\max}>0$, we are automatically in case 1 of Proposition~\ref{prop:dichotomy}. 

Thus $\delta_{\max}  =\delta^*$ is the radius of a circle $\partial B(\mathbf p,\delta^*)$ centered in the interior of the triangle and tangent to the three adjacent circles $\partial B(\mathbf r_i, R_i)$. 

This is precisely the special case of the Apollonius problem addressed by Descartes' circle theorem.   There are exactly two circles adjacent to the three circles $\partial B(\mathbf r_i, R_i)$.
 There always exists one solution circle which is inscribed inside the void delimited by the three given circles 
while the other solution circle either contains all three of them (bottom panel of Fig.~\ref{fig:circle3}a) or forms a void together with two of the given circles containing the third one (bottom panel of Fig.~\ref{fig:circle3}b).
Clearly the solution that interests us is the inner circle, called inner Soddy circle. Its radius is known (see for example \cite{coxeter68,oldknow96}), it is equal to the expression for $\delta_{\mathrm{cr}}(R_1,R_2,R_3)$ given in Proposition~\ref{theorem_dR}. This completes the proof of the proposition. \hfill $\qed$

\section{Physical interpretation of geometric criteria \label{sec_conclusions}}

To summarize, we have established rigorous geometric criteria for the absence of many-body interactions higher than pairwise
in nonadditive AO-type mixtures involving hard colloids and ideal depletants in any spatial dimension.
Once the stated criteria are obeyed, our results imply that the depletants can be integrated out and there
exists an exact mapping onto a system with effective pairwise interactions only.
 Mathematically, the absence of triplet and $N$-body interactions with $N\geq3$ in a fluid of hard bodies for a given depletion radius
 is equivalent to the triple intersection of the corresponding dilated bodies (and thus also the intersections between more than three dilated bodies) having zero volume for any possible configuration of the hard bodies.  
 We elucidated relevant conditions for this problem in Theorems~\ref{thm:main1} and~\ref{thm:main2}
 and proved them with elementary geometry.

For the exactness of the depletion interaction between {\it identical colloidal hard spheres}, 
the sufficient and necessary geometric criterion, Eq.~\eqref{eq_deltaMO}, was
already stated in the literature but is mathematically proven here as a special case of Eq.~\eqref{eq_deltaMOmath} with all radii  being equal, i.e., for $R=R_1=R_2=R_3$.

The generalized condition in our Theorem~\ref{thm:main2} holds for {\it polydisperse mixtures of hard spheres} in any spatial dimension, where the radius $R=\min_{i}R_i$ of the smallest species sets the threshold according to Eq.~\eqref{eq_deltaMOmath}.

Moreover, we provide a condition, Eq.~\eqref{eq_deltaMOmathGEN}, which is sufficient for the exactness of the pairwise depletion interaction between {\it convex hard particles}, also including {\it polydisperse mixtures} thereof.
The generalized upper bound for the depletion radius provided in Theorem~\ref{thm:main1} is proportional to the minimal rolling radius among all bodies,
 which corresponds to the curvature radius at the point on the surface of any body which has the largest curvature.
This criterion may, however, turn out to be extremely susceptible to irregularities of the surface geometry (a detailed discussion can be found below in Sec.~\ref{sec_outlook}).
 We therefore note that a systematic improvement of the threshold in Eq.~\eqref{eq_deltaMOmathGEN} can be achieved, along the lines of Remark~\ref{rem:improvedcriterion},
 by applying  Theorem~\ref{thm:main1} to appropriately chosen auxiliary bodies (with larger rolling radii) that are contained in the original bodies.

By slightly reformulating our main theorems, we also address the
problem of effective depletion interactions between {\it arbitrarily-shaped convex particles and a planar hard wall} in Theorem~\ref{thm:main1WALL}.
In this case, a one-body depletion potential can be found from integrating out the depletants.
Such a potential is exact if there are no effective pair and $N$-body interactions with $N\geq2$, 
which is again equivalent to the triple intersection of the corresponding dilated bodies (including that of the wall) having zero volume.
The sufficient and necessary geometric criterion, Eq.~\eqref{eq_deltaPL}, for identical spheres
has been generalized to Eq.~\eqref{eq_deltaMOmathWALL}, which again shows that the radius of the smallest species determines the critical depletion radius
 for polydisperse mixtures of hard spheres.
Relatedly, Eq.~\eqref{eq_deltaMOmathGENWALL} provides a sufficient criterion for general polydisperse mixtures based on the minimal occurring curvature radius.

 Although not explicitly stated in our theorems, our mathematical analysis directly applies to
 {\it polydisperse spherical depletants}, for which the depletion radius $\delta^{(\nu)}$ may depend on the species $\nu$.
 In this case, our criteria should be interpreted to provide thresholds for the maximal depletion radius $\delta=\max_{\nu}\delta^{(\nu)}$ associated with the ``largest'' depletant species.

{\it Non-spherical depletants} such as needles or rods, which have been widely considered in the literature \cite{Mao1,Mao2,Schweizer,Lang,mueller,Lettinga,opdam2022excluded},
give rise to a nonspherical shape of the depletion shell which depends on their orientation.
The thresholds for $\delta$ can thus directly be applied as sufficient criteria on the maximal radius of such a generalized depletion shell.
However, considering the resulting generalized geometric overlap problem in its full complexity, a better upper bound could be found for some particular situations in future work.

\section{Conclusions \label{sec_outlook}}

Let us conclude with a couple of remarks regarding the broader physical implications of our criteria and possible generalizations.

First of all, if the convex colloidal particles have a {\it cusp\/}
 in their shape, like colloidal polyhedra \cite{Vroege,Marechal2},
our sufficient criteria become trivial since the rolling radius vanishes (diverging local surface curvature).
The resulting condition $\delta<0$ can never be fulfilled by a positive depletion radius $\delta$,
such that the absence of higher-body depletion interactions can never be guaranteed.
Indeed, many regular particle shapes, e.g., tetrahedral or cubic, 
can arrange in configurations with the cusps of three or more particles at direct contact.
 In these special cases, it is apparent that any choice of depletion radius $\delta>0$ results in overlapping depletion shells, 
so that our criteria even become necessary.
In contrast, if configurations with three touching cusps are forbidden, e.g., for particles with sufficiently large opening angles, the formal condition $\delta<0$ apparently is no longer sharp.
 This extreme example illustrates that the maximal depletion radius can strongly depend on the particle shape.
In practice, this is also relevant for colloidal particles, which are always slightly rounded \cite{Vroege}.
Regarding the convenient possibility to approximate the shape of smooth bodies by appropriate polygons, particles with cusps do indeed remain relevant for theoretical treatments.
 In both cases of singular cusps or smooth irregularities of the surface, an improved sufficient bound can be implicitly specified according to Remark~\ref{rem:improvedcriterion}.
The open question of whether this criterion is also necessary remains an interesting problem for future work.

Second, {\it non-convex\/} hard bodies play an important role in modeling lock-and-key colloids \cite{Pine1,Pine2}
where the depletion attraction has been used as a bonding mechanism \cite{Pine1,Jack}.
Other relevant nonconvex shapes are clusters of spheres firmly attached to each other by surface chemistry
\cite{Kraft,Kegel}, colloidal bowls \cite{Marechal,Jack,Pine3} or even more general objects, see Ref.~\cite{Wittkowski} for some examples.
 As stated in Remark \ref{rem:convexity}, our sufficient criteria do also apply for nonconvex shapes.
However, the provided thresholds could, in principle, become arbitrarily poor, imagining, for example particles possessing a cusp which is inaccessible to other particles.
 Therefore, we leave an improvement of our bounds for such particles to future studies.

Third, another important problem which we did not explicitly consider here is the depletion at {\it curved hard walls}.
In the case of externally placed obstacles \cite{interface_AO_3}, which provide a model for a porous medium,
a criterion for the absence of effective external pair interactions can be found analogously to case of a planar hard wall
by considering Eq.~\eqref{eq:mindelta-crWALL} without sending the wall radius $R_\text{w}$ to infinity.
We further expect that the structure of our proof can also be applied to rigorously determine
an explicit formula for the critical depletion radius for a fluid in external hard-body confinement \cite{interface_AO_4}.
In this case, we expect that such a threshold is related to a solution of the Apollonius problem involving internal tangency.
 
Fourth, one may consider more realistic situations of {\it softened colloid--polymer interactions} \cite{Louis0,Fuchs}
or non-vanishing polymer--polymer interactions \cite{Louis0,Chervanyov}.
Only for colloid--polymer interactions which exhibit a sharp exclusion
zone of radius $\delta$ plus an interaction of strict finite range $\epsilon$
and for ideal polymers, the pure geometric concept applies
such that our criteria can be used for the overall range $\delta+\epsilon$ of the colloid--polymer interaction. Long-ranged colloid--polymer interactions
will induce many-body interactions even for ideal polymers. Likewise, non-vanishing
polymer--polymer interactions will in general contribute to many-body interactions of any order
due to their finite correlation length.

Fifth, recent studies have included {\it activity\/}  in the depletants leading to
active depletion interaction between colloids \cite{Reichhardt,Ni,Harder,Smallenburg}.
For non-additive hard-particle mixtures and ideal active depletants, the generalization of the AO
pair interaction was computed recently \cite{Smallenburg} using the concept of swim pressure.
It is important to mention here that our proof for the absence of many-body interactions
does also apply to active ideal depletants, as the basic geometric conditions are identical.
The only difference is that for active depletants
the osmotic swim pressure depends on the shape and local curvature
 of the excluded zone but this only affects the strength of the pair interactions but not
the geometric conditions for triple intersection.
On a further note, the effective interactions between active colloids display intriguing similarities with effective depletion interactions \cite{turci2021phase}
but are generally of intrinsic many-body nature \cite{kaiser2012capture,farage2015effective,rein2016applicability,wittmann2017effective,turci2021phase}.

Finally, beyond verifying the absence of effective triplet interactions in the AO model, the {\it magic number\/} $2/\sqrt{3}-1\approx0.1547$ 
is in fact a more general indicator of crossover behavior in hard-particle systems.
For example, the close-packing density in a binary mixture of hard spheres changes its behavior as a function of the size ratio around this value \cite{mcgearyHSpacking1961}.
 When the size ratio is below this threshold, there even exist dynamical escape routes for the smaller particles preventing their vitrification~\cite{binaryglassmagic}.
Our presented ideas could thus be also helpful to deepen the mathematical understanding of such related physical problems.

\section*{Acknowledgments}

The authors would like to thank R.~D.~Mills-Williams for helpful discussions at the initial stage of the project, D.~Hug for stimulating comments, J.~Horbach for pointing out relevant references and 
G.~H.~P.~Nguyen for aid in creating the graphics. Special thanks goes to M.~A.~Klatt for suggesting Remark~\ref{rem:improvedcriterion} after carefully reading the manuscript.
Finally, we gratefully acknowledge support by the Deutsche Forschungsgemeinschaft (DFG) through the SPP 2265, under grant numbers WI 5527/1-1 (R.W.),  JA 2511/2-1 (S.J.) and LO 418/25-1 (H.L.).

\end{document}